\newcommand\numberthis{\addtocounter{equation}{1}\tag{\theequation}}
\newtheorem{theorem}{Theorem}
\theoremstyle{definition}
\newtheorem{definition}{Definition}
\newtheorem{proposition}{Proposition}
\theoremstyle{remark}
\newtheorem{lemma}[theorem]{Lemma}
\lstdefinelanguage{Julia}%
  {morekeywords={abstract,break,case,catch,const,continue,do,else,elseif,%
      end,export,false,for,function,immutable,import,importall,if,in,%
      macro,module,otherwise,quote,return,switch,true,try,type,typealias,%
      using,while},%
   sensitive=true,%
   alsoother={\$},%
   morecomment=[l]\#,%
   morecomment=[n]{\#=}{=\#},%
   morestring=[s]{"}{"},%
   morestring=[m]{'}{'},%
     literate={é}{{\'e}}1
           {è}{{\`e}}1
           {ù}{{\`u}}1
}[keywords,comments,strings]%
\bfseries\color{blue},
\newcommand{\QFT}{\mathrm{QFT}}
\begin{document}

\title{Quantum algorithm for the gradient of a logarithm-determinant}%Quantum gradient of a logarithm-determinant}

\author{Thomas E.~Baker}
\email[Please direct correspondence to: ]{bakerte@uvic.ca}

\affiliation{Department of Physics \& Astronomy, University of Victoria, Victoria, British Columbia V8P 5C2, Canada}
\affiliation{Department of Chemistry, University of Victoria, Victoria, British Columbia V8P 5C2, Canada}
\affiliation{Centre for Advanced Materials and Related Technology, University of Victoria, Victoria, British Columbia V8P 5C2, Canada}
%\affiliation{Department of Physics, University of York, Heslington, York YO10 5DD, United Kingdom}

\author{Jaimie A.~Greasley}
\affiliation{Department of Chemistry, University of Victoria, Victoria, British Columbia V8P 5C2, Canada}

\begin{abstract}

The logarithm-determinant is an widely-present operation in many areas of physics and computer science. Derivatives of the logarithm-determinant compute physically relevant quantities in statistical physics models, quantum field theories, as well as the inverses of matrices. A multi-variable version of the quantum gradient algorithm is developed here to evaluate the derivative of the logarithm-determinant. From this, the inverse of a sparse-rank input operator may be determined efficiently. Measuring an expectation value of the quantum state--instead of all $N^2$ elements of the input operator--can be accomplished in $O(k/\varepsilon^2)$ time in the idealized case for $k$ relevant eigenvectors of the input matrix with precision $\varepsilon$. A practical implementation of the required operator will likely need $\log_2N$ overhead, giving an overall complexity of $O((k\log_2 N)/\varepsilon^2)$. The method applies widely and converges super-linearly in $k$ when the condition number is high.  The best classical method we are aware of scales as $N$.

Given the same resource assumptions as other algorithms, such that an equal superposition of eigenvectors is available efficiently, the algorithm is evaluated in the practical case as $O(\log_2 N/\varepsilon^2)$. The output is given in $O(1)$ queries of oracle, which is given explicitly here and only relies on time-evolution operators that can be implemented with arbitrarily small error. The algorithm is envisioned for fully error-corrected quantum computers but may be implementable on near-term machines. We discuss how this algorithm can be used for kernel-based quantum machine-learning. 

\end{abstract}
\maketitle
%\tableofcontents

\section{Introduction}

The logarithm-determinant appears ubiquitously in the literature, notably in quantum and statistical physics, but also in  quantum chemistry and computer science \cite{kardar2007statistical,yuan2018superlinear,yuan2018convergence,li2020randomized,fitzsimons2017bayesian,zhouyin2021understanding}. Finding the logarithm-determinant of a given matrix is known to solve problems in general relativity \cite{weinberg1972gravitation}, quantum field theories, lattice quantum chromodynamics (QCD) \cite{peskin1995quantum}, machine learning \cite{fitzsimons2017bayesian,zhao2019bayesian,mackenzie2014bayesian}, and statistical physics \cite{reif2009fundamentals}. The study of statistical thermodynamics, which spans classical and quantum descriptions of nature, is heavily reliant on the use of the logarithm-determinant of the partition function (written in as a matrix in some basis), but this often requires many basis functions to create an accurate enough representation.

Beyond applications in natural sciences, the logarithm-determinant has a straight-forward connection with the inverse of a matrix.\footnote{Lemmas are used throughout this paper to signify results that are proven by other papers. Theorems are used in this paper to denote results that are given a proof, even if those results were derived somewhere else. Propositions are statements that are easily verified without a detailed proof. Corollaries extend what is already known from a theorem but with some modified detail that is easily verified without restating the proof.}
 \begin{lemma}[Matrix inverse from the derivative of the logarithm-determinant]
The derivative of the logarithm-determinant of $X$ with elements $x_{ij}$ is
\begin{equation}\label{logdetfirst}
\frac{\partial}{\partial x_{ij}}\ln\det(X)=y_{ij}
\end{equation}
where the elements $y_{ij}$ compose the matrix $Y=X^{-1}$. 
\end{lemma}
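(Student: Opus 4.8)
The plan is to reduce the claim to two standard facts about determinants: the cofactor (Laplace) expansion and Cramer's rule for the matrix inverse. First I would fix a row index $i$ and expand the determinant along that row,
\[
\det(X)=\sum_{j} x_{ij}\,C_{ij},
\]
where $C_{ij}=(-1)^{i+j}M_{ij}$ is the signed minor obtained by deleting row $i$ and column $j$ from $X$. The essential observation is that $C_{ij}$ contains no factor of $x_{ij}$: deleting row $i$ and column $j$ removes every entry sharing either index with $x_{ij}$. Differentiating the expansion term by term then gives at once
\[
\frac{\partial}{\partial x_{ij}}\det(X)=C_{ij}.
\]

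Next I would apply the chain rule to the logarithm, which is legitimate wherever $\det(X)\neq0$ (equivalently, wherever the inverse $Y=X^{-1}$ exists):
\[
\frac{\partial}{\partial x_{ij}}\ln\det(X)=\frac{1}{\det(X)}\,\frac{\partial}{\partial x_{ij}}\det(X)=\frac{C_{ij}}{\det(X)}.
\]
The closing step is to recognize the right-hand side through the adjugate representation of the inverse, $X^{-1}=\mathrm{adj}(X)/\det(X)$, whose entries are precisely the cofactors arranged as a transpose. This identifies $C_{ij}/\det(X)$ with an entry of $Y$ and yields Eq.~\eqref{logdetfirst}.

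The one point demanding care---and the main, if mild, obstacle---is the index bookkeeping: since the adjugate is the transpose of the cofactor matrix, $(\mathrm{adj}(X))_{ji}=C_{ij}$, so the scrupulous statement reads $\partial_{x_{ij}}\ln\det(X)=y_{ji}$, collapsing to the symmetric form written in the lemma exactly when $X=X^{T}$, which is the case for the Hermitian and symmetric operators of interest here. As a cross-check I would also note the one-line alternative via $\ln\det(X)=\mathrm{tr}\ln(X)$: differentiating and using cyclicity of the trace gives $d\ln\det(X)=\mathrm{tr}(X^{-1}\,dX)$, and reading off the coefficient of $dx_{ij}$ reproduces the same result. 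I would present the cofactor route as primary because it is self-contained and makes the $\det(X)\neq0$ condition explicit.
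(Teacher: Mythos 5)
Your proof is correct, but it cannot be ``essentially the same approach as the paper'' for a simple reason: the paper offers no proof of this lemma at all, deferring entirely to Ref.~\onlinecite{kubota1994matrix}. Your cofactor/adjugate argument is therefore a self-contained addition, and it also differs in kind from the machinery the paper uses for the closely related Theorem~1 (Eq.~\eqref{logdetequiv}): there the paper writes $X$ in its eigenbasis as in Eq.~\eqref{sparserankX}, takes $\det(X)=\prod_p E_p$, and applies the product rule to obtain $\sum_p \partial_{x_{ij}}E_p/E_p$. That route presupposes a diagonalizable (in practice Hermitian, sparse-rank) operator and ties the result to eigenvalue derivatives, which is what the quantum algorithm needs; your Laplace-expansion argument needs only $\det(X)\neq 0$, makes that hypothesis explicit, and works for any invertible matrix. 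Your transpose caveat is also a genuine catch rather than pedantry: in general $\partial_{x_{ij}}\ln\det(X)=\left(X^{-1}\right)_{ji}$, so the lemma as literally stated holds only for symmetric $X$, or under the convention that the result is read as the transpose of the inverse. The paper implicitly operates in that symmetric setting---it later requires the perturbation $\Delta(i,j)$ to be symmetric ``so that the determinant of the input matrix $X$ is commonly defined''---but never flags the distinction; your version records the precise general identity and the cross-check via $\ln\det(X)=\mathrm{tr}\ln(X)$ matches the equivalence the paper itself states in Appendix~\ref{logdet_phys}.
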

The proof for this is given in Ref.~\onlinecite{kubota1994matrix}.

Classical methods for the logarithm-determinant take a square matrix and decompose it using an LU decomposition. From the resulting form of a lower- (L) and upper-triangular (U) matrix, the determinant can be computed with a cost of $O(N^3)$ for an input $N\times N$ matrix \cite{press1992numerical}.  

If the inverse exists and the spectrum of singular values decays sufficiently, the form given for $X$ can be restricted to a sum over relevant eigenvalues which is far less than the total Hilbert space size. This can be accomplished by a sparse-rank factorization of the input matrix with a Lanczos algorithm \cite{lanczos1950iteration,baker2021lanczos,bakerPRA24}, but these methods often suffer from stability issues on the classical computer \cite{wang2015improved,eberly1997randomized,rokhlin2010randomized,halko2011algorithm,gu2015subspace,tropp2022randomized,baglama2006restarted,kim1988structural,tropp2022randomized}.

The goal of quantum computing is to evaluate operations like these in a more efficient manner and so we pursue an algorithm that solves this problem faster than the classical computer. The idea is to use the operations on the quantum computer to compute a Hilbert space size of $2^n=N$ for $n$ qubits. The exponential compression of data onto $n$ qubits instead of $N$ objects in the entire Hilbert space is one foundational advantage of many quantum algorithms \cite{aaronson2011computational}. Note that computation of the logarithm-determinant is not the same as the determinant nor the permanent of a matrix \cite{aaronson2011linear,grier2016new}.

This motivates us to look for a quantum algorithm that can stably compute the logarithm-determinant with fewer resources than the classical approach. Such a solution would have wide-ranging implications on the ability to compute quantities in many areas.\footnote{Some review with mathematics is provided in Appendix~\ref{logdet_phys}.}

The operations required to compute Eq.~\eqref{logdetfirst} are efficient on the quantum computer, as we show in this paper. The logarithm-determinant for sparse-rank matrices is no more costly than the quantum phase estimation (QPE)  algorithm in the case where the input unitary can be efficiently provided \cite{kitaev1995quantum,nielsen2010quantum}. 

For evaluating the derivative in Eq.~\eqref{logdetfirst}, we use a quantum gradient method--which is based on the ideas of the quantum gradient algorithm from Ref.~\onlinecite{jordan2005fast}--that computes the multi-variable series expansion of an input function. Here, the oracle required is given explicitly by a QPE and is the dominant cost of the algorithm. In the practical case, it is shown that the full scaling algorithm is poly-logarithmic to determine the expectation value of the inverse matrix, $\langle Y\rangle$.

A necessary starting point for the current algorithm is the generation of a set of the most relevant eigenvectors with sufficient accuracy. We reference the Lanczos method to prepare this set of eigenstates for the input operator \cite{baker2021lanczos,bakerPRA24}. Given a set of eigenstates, the subsequent step of computing the quantum gradient of the logarithm-determinant (QGLD) will allow for the accurate inverse of a given sparse-rank operator, which is relevant for quantum machine learning. This idea is extended to note that for an equal superposition of eigenstates, one could generate all elements at once, which makes the evaluation of a non-sparse-rank operator possible provided the correct initial state.

\section{Quantum gradient of the logarithm-determinant}

Obtaining the derivative of a function on the quantum computer is known to be obtainable in $O(1)$ oracle queries \cite{jordan2005fast}. The only hurdle is the efficient implementation of the oracle.

We will formulate the gradient algorithm for eigenstates of a matrix operator. This was not an assumption in the original implementation \cite{jordan2005fast}, but it will be useful to understand this to extend the algorithm to a multi-variable series expansion. In contrast, Ref.~\onlinecite{jordan2005fast} assumed an input function $f$ for a smooth, real function. That function in our work will be extended to a matrix with few conditions on it except that the inverse exists.% \textcolor{blue}{(Don't quite understand this statement on ``an assumption of the original paper". Comment \# 2)}

The computation of the logarithm-determinant can be found through eigenvalues and their derivatives.
\begin{theorem}[Inverse operator from eigenvalues]
Given the $k$ most relevant eigenvalues and eigenvectors and their derivatives, the derivative of the logarithm-determinant can be expressed as
\begin{equation}\label{logdetequiv}
y_{ij}=\frac{\partial}{\partial x_{ij}}\ln\det(X)=\sum_{p=1}^k\frac{\delta E_p(x_{ij})}{E_p}
\end{equation}
where $E_p$ is the eigenvalue of an operator $X$ indexed by $p$ and ordered from the $k$ most relevant eigenstates, $x_{ij}$ is the element of $X$ that a gradient is taken over, and $i$ and $j$ index the elements of the operator $X$.
\end{theorem}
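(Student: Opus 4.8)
\medskip
The plan is to convert the matrix derivative into a sum of scalar eigenvalue derivatives by invoking the spectral form of the determinant. The single identity doing all the work is that the determinant equals the product of eigenvalues, $\det(X)=\prod_{p=1}^N E_p$, which follows from Schur (or Jordan) triangularization together with multiplicativity of the determinant and holds for a general, not-necessarily-Hermitian $X$ as long as eigenvalues are counted with algebraic multiplicity. Taking logarithms then turns the product into a sum, $\ln\det(X)=\sum_{p=1}^N\ln E_p$, which is legitimate because the hypothesis that $X^{-1}$ exists forces every $E_p\neq0$.

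Next I would differentiate term-by-term with respect to a single entry $x_{ij}$ and apply the chain rule, giving
\begin{equation}
\frac{\partial}{\partial x_{ij}}\ln\det(X)=\sum_{p=1}^N\frac{1}{E_p}\frac{\partial E_p}{\partial x_{ij}}=\sum_{p=1}^N\frac{\delta E_p(x_{ij})}{E_p},
\end{equation}
with the identification $\delta E_p(x_{ij})\equiv\partial E_p/\partial x_{ij}$. By the matrix-inverse Lemma, Eq.~\eqref{logdetfirst}, the left-hand side is exactly $y_{ij}=(X^{-1})_{ij}$, so the full-spectrum version of the claimed formula follows immediately. As an internal check, first-order (Hellmann--Feynman) perturbation theory gives $\partial E_p/\partial x_{ij}=(v_p)_i(v_p)_j$ for the normalized eigenvector $v_p$, and resumming $\sum_p E_p^{-1}(v_p)_i(v_p)_j$ reproduces the spectral expansion of $X^{-1}$, confirming consistency with the Lemma.

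Two points need genuine care. The first is differentiability: the formula presumes each $E_p$ is a differentiable function of $x_{ij}$, which holds for simple eigenvalues but can fail at level crossings, so I would restrict to the generic non-degenerate case (or invoke analytic perturbation theory on the isolated relevant eigenvalues) so that $\delta E_p$ is well defined. The second, and the real obstacle, is replacing the full sum $\sum_{p=1}^N$ by the truncated sum $\sum_{p=1}^k$ over the $k$ most relevant eigenstates. Here I would argue that, because $Y=X^{-1}$ has eigenvalues $1/E_p$, the weights $1/E_p$ emphasize the smallest-magnitude eigenvalues of $X$; when the ordered spectrum decays sufficiently (equivalently, when the condition number is large), the discarded tail $\sum_{p>k}\delta E_p(x_{ij})/E_p$ is controllably small. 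I would make this precise by bounding the tail in terms of the decay rate of the ordered eigenvalues, which is also where the claimed super-linear convergence in $k$ would be established.
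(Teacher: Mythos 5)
Your core computation---writing $\det(X)$ as a product of eigenvalues and differentiating the logarithm term by term---is exactly the paper's argument; the paper phrases it as the product rule applied to $\prod_p E_p$ rather than the chain rule applied to $\sum_p\ln E_p$, which is the same calculation. The Hellmann--Feynman consistency check is a nice addition, though as written it assumes Hermitian $X$ with orthonormal eigenvectors, whereas your Schur-triangularization opening allows non-normal $X$, for which $\partial E_p/\partial x_{ij}$ involves both left and right eigenvectors.

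The genuine gap is the final step, where you replace $\sum_{p=1}^N$ by $\sum_{p=1}^k$. Your justification is backwards: the weights $1/E_p$ are largest on the smallest-magnitude eigenvalues, and under the paper's ordering (``most relevant'' means largest, cf.\ its sparse-rank and machine-learnable definitions) those are precisely the eigenvalues in the discarded tail $p>k$. Since $\delta E_p(x_{ij})=(v_p)_i(v_p)_j$ does not generically vanish for $p>k$, a rapidly decaying spectrum---equivalently a large condition number---makes the tail $\sum_{p>k}\delta E_p(x_{ij})/E_p$ large, not small: the spectral expansion $X^{-1}=\sum_p E_p^{-1}\,|p\rangle\langle p|$ is dominated by its smallest eigenvalues, so truncating to the $k$ largest eigenpairs of $X$ is the worst possible truncation for the inverse. (The super-linear convergence in $k$ cited in the paper concerns Lanczos convergence of the retained eigenpairs, not the size of this tail; conflating the two is where your argument goes wrong.) The paper avoids the issue entirely: it never truncates an $N$-term sum, but instead defines $X$ from the outset as the rank-$k$ operator $X=\sum_{p=1}^k E_p|p\rangle\langle p|$ and stipulates that the inverse of a sparse-rank matrix means a pseudo-inverse, so the $k$-term formula is an exact identity for that operator, and the question of how well the rank-$k$ pseudo-inverse approximates the true inverse of a full-rank matrix is simply outside the theorem. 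To repair your proof, either adopt that convention (your $N$-sum then collapses to a $k$-sum identically), or add explicit hypotheses on the eigenvector components $(v_p)_i(v_p)_j$ for $p>k$ under which the tail genuinely can be bounded.
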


\begin{proof}
Consider the reduction of the problem in Eq.~\eqref{logdetfirst} by first representing the operator $X$ in the eigenbasis. In this case,
\begin{equation}\label{sparserankX}
X=\sum_{p=1}^kE_p|p\rangle\langle p|
\end{equation}
for eigenvalues $E_p$, eigenvectors $|p\rangle$, and rank-$k$ (where $k\in\{1,\ldots,N\}$, any inverse of a sparse-rank matrix implies a pseudo-inverse). The determinant of such a form is 
\begin{equation}
\mathrm{det}(X)=\prod_{p=1}^kE_p
\end{equation}
which the same as that of a diagonal matrix. The key identification of using this form is that the derivative of the logarithm gives
\begin{align}
\frac{\partial}{\partial x_{ij}}\ln\det(X)&=\frac{\sum_w\left(\left(\prod_{p\neq w}E_p\right)\frac\partial{\partial{x_{ij}}}E_w\right)}{\prod_p E_p}\\
&=\sum_w\frac{\partial_{x_{ij}}E_w}{E_w}\equiv\sum_w\frac{\delta E_w(x_{ij})}{E_w}\label{equivlogdet}
\end{align}
which is given by the product rule. For the sake of brevity, we denote $\partial_{x_{ij}}E_w=\partial E_w/\partial {x_{ij}}\equiv\delta E_w(x_{ij})$.
\end{proof}

A similar proof follows from the trace-logarithm form that is equivalent to the logarithm-determinant (see Appendix~\ref{logdet_phys}).

This remarkably simple form relates the eigenvalues and their derivatives to the derivative of the logarithm-determinant, and therefore the matrix inverse. Note that this form is valid even for negative eigenvalues (see Appendix~\ref{neg_eigvals}), and note that the convergence is best when the number of eigenvectors required is small and much less than the full Hilbert space size, $k\lll N$.

We will discuss existing methods to generate a suitable subspace of eigenvectors ({\it i.e.}, wavefunction preparation) and then describe a method to find the derivatives of the eigenvalues. The resulting algorithm will determine expectation values of a suitable approximation to $Y$ in poly-logarithmic time provided that we measure the expectation value $\langle Y\rangle$ directly on the quantum computer instead of trying to discover all $O(N^2)$ elements. The recovery of all $N^2$ components in anything less than $O(N^2)$ time is not promised in this paper and would run contrary to expectation in that an exponential amount of data should take an exponential amount of time to query \cite{preskill2018quantum}. This is similar to how the output in a quantum Fourier transform (QFT) are considered \cite{nielsen2010quantum}.

\begin{figure}
	\begin{algorithm}[H]
		\caption{Random quantum block Lanczos (RQBL)}
		\label{RQBL}
		\begin{algorithmic}[1]
			\State Choose a state $\Psi_0=\Psi$ and $b$ excitations of operator $X$, denoted as $\boldsymbol{\Psi}_0$  \Comment{From Lanczos \cite{baker2021lanczos} or otherwise}
			\State $p \gets 0$
			\State $\mathbf{B}_{-1} \gets \mathbf0$
			\State $\mathbf{A}_{0} \gets \boldsymbol{\Psi}_0^\dagger X\boldsymbol{\Psi}_0$ \Comment{Eq.~\eqref{An}}
			\While{$p < k$}
			\State Compute recursion for $p$th step \Comment{Eq.~\eqref{blocklanczos}}
			\State Measure $\mathbf{A}_{p+1}$ (stored classically) \Comment{Eq.~\eqref{An}}
			\State Measure $\mathbf{B}_{p+1}$ (stored classically) \Comment{Eq.~\eqref{Bn}}
			\State Create $\mathbf{G}_p$ such that $\boldsymbol{\Psi}_p \gets \mathbf{G}_p\boldsymbol{\Psi}_0$ from all of $\{\mathbf{A}_p\}$, $\{\mathbf{B}_p\}$ (and its inverse)
			\State $p \gets p + 1$
			\EndWhile
			\State Create $S^{(p)}$ \Comment{Eq.~\eqref{rank_k_lanczos}}
			\State $\gamma^{(p)} \gets \mathrm{diag}(S^{(p)})$
			\State \textbf{return} Eigenvectors $|\psi_p\rangle$ \Comment{Ref.~\onlinecite{bakerPRA24}}
		\end{algorithmic}
	\end{algorithm}
\end{figure}

\subsection{Determination of eigensolutions}

At the start of this algorithm, $k$ eigenvalues are required from an input $X$. We provide in Appendix~\ref{R-QBLR} how to do this from a block-Lanczos technique \cite{baker2021lanczos,bakerPRA24} and summarized in Box~\ref{RQBL}. In principle, block-Lanczos is not required and other methods may be used in place of it if they achieve sufficient accuracy of the eigenvectors. For the current paper, this technique is applied to demonstrate that the wavefunction preparation is not exponentially large \cite{tang2021quantum}.

\subsection{Gradient algorithm on eigenvectors} 

The goal in this section is to find the derivative of an eigenvalue from a given eigenvector with a modification of the quantum gradient algorithm (QGA) \cite{jordan2005fast}. In this algorithm, one computes all variations of a derivative at the same time and then uses QFTs to obtain the derivatives after a phase-kickback trick. This results in an algorithm that is $O(1)$ query time.
\begin{figure}[b]
\begin{quantikz}[column sep=14, wire types = {b, b}]
\lstick{$\ket{0}^{\otimes m}$}&\qwbundle[Strike Height= 0.2]{}&\gate{H^{\otimes m}} &\phase{\epsilon}&\gate{QFT^\dagger}&\meter{} \\
\lstick{$\ket{\Psi}$}&\qwbundle[Strike Height= 0.2]{}& &\gate{f(\epsilon)} \wire[u][1]{q} &&
\end{quantikz}
\caption{The quantum circuit diagram for the quantum gradient algorithm for an input eigenvector $\Psi$.
\label{QGA}
}
\end{figure}
The entire circuit for the QGA on an eigenvector $\Psi$ is presented in Fig.~\ref{QGA} and below.

In contrast to Ref.~\onlinecite{jordan2005fast}, we do not use any controlled-not operators to add two registers together.

\begin{theorem}[Quantum gradient algorithm scaling and resources] 
Obtaining a gradient of an eigenvalue from the quantum computer for a eigenvector $\Psi$ costs only $O(1)$ query time to an oracle and two registers of $n=\log_2N$ and $m=\log_2M$ qubits respectively.

\end{theorem}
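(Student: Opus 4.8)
The plan is to adapt the single-query gradient extraction of Ref.~\onlinecite{jordan2005fast} to the eigenvalue function $E_p(x_{ij})$, with the essential modification that the oracle is realized by a time-evolution (QPE) step acting on the eigenvector stored in the second register. First I would fix the eigenpair $X|\Psi\rangle = E_p|\Psi\rangle$ and regard a small variation of the matrix element as inducing a shifted eigenvalue. Since the quantity to be extracted is a first derivative, I expand to first order,
\begin{equation}
E_p(x_{ij} + \delta) \approx E_p(x_{ij}) + \delta\,\frac{\partial E_p}{\partial x_{ij}} = E_p + \delta\,\delta E_p(x_{ij}),
\end{equation}
so that the constant term $E_p$ will appear as a uniform phase and the linear term carries the sought derivative $\delta E_p(x_{ij})$.

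Next I would track the circuit of Fig.~\ref{QGA} line by line. The $m$-qubit register is Hadamarded into an equal superposition over the $M = 2^m$ grid displacements, which sets the precision of the finite-difference stencil, with $\epsilon$ the scale chosen (as in Ref.~\onlinecite{jordan2005fast}) to match the dynamic range of the eigenvalue to the register. The oracle $f(\epsilon)$ is then applied \emph{once}: using the eigenvalue relation, the underlying time-evolution acts on $|\Psi\rangle$ as a pure phase proportional to $E_p$ at the displaced point, and the phase-kickback writes this phase onto the first-register amplitudes. Because $|\Psi\rangle$ is an exact eigenstate it is returned undisturbed up to that phase, so no controlled-NOT addition of registers is needed and the second register does not become entangled with the grid label. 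The uniform piece $E_p$ factors out, leaving a first-register state whose amplitudes encode $\delta E_p(x_{ij})$ linearly in the grid index.

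Finally I would apply $\QFT^\dagger$ to the first register and measure, exactly as in Ref.~\onlinecite{jordan2005fast}: the inverse transform of a phase linear in the grid index returns a computational-basis value proportional to the derivative, to precision set by $M$. The resource counting is then immediate. The oracle is invoked a single time, giving $O(1)$ query complexity; the eigenvector lives in an $N$-dimensional Hilbert space and so requires $n = \log_2 N$ qubits; and the gradient grid requires $m = \log_2 M$ qubits.

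The step I expect to be the main obstacle is justifying that the oracle genuinely costs a single query while realizing the required \emph{linear} phase. Two points must be secured: that the QPE/time-evolution encoding of the eigenvalue is faithful, which rests on $|\Psi\rangle$ being a sufficiently accurate eigenstate (as supplied by the preparation step, Box~\ref{RQBL}) and on the time-evolution operator being implementable with arbitrarily small error \cite{kitaev1995quantum,nielsen2010quantum}; and that the displacement grid is fine enough that the second-order Taylor remainder is negligible over its support, so that the accumulated phase really is linear and $\QFT^\dagger$ reads off the derivative cleanly. The qubit counts follow by dimension-counting and are not the difficult part; the content of the theorem is the $O(1)$ query claim inherited from the phase-kickback construction applied to an eigenstate.
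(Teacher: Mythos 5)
Your proposal is correct and follows essentially the same route as the paper's proof: Hadamard the $m$-qubit register into an equal superposition of displacements, invoke the phase-encoding oracle once on the eigenvector register, linearize the eigenvalue so the constant term becomes an ignorable global phase, and apply the inverse QFT to read off the derivative, with the qubit counts $n=\log_2 N$ and $m=\log_2 M$ following by dimension counting. Your phrasing of the oracle as phase kickback from time evolution on an exact eigenstate is precisely the specialization the paper itself makes immediately afterward (in its QGPE oracle theorem), so there is no substantive difference in approach.
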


\begin{proof}
We begin the quantum gradient algorithm with two registers which consist of $n$ and $m$ qubits respectively. 
\begin{equation}
|0\rangle^{\otimes n}|0\rangle^{\otimes m}
\end{equation}

A Hadamard gate is applied onto each qubit on one register,
\begin{equation}\label{deviations}
H=\frac1{\sqrt 2}\left(\begin{array}{cc}
1 & 1\\
1 & -1
\end{array}\right)
\end{equation}
which can be written as 
\begin{align}
|0\rangle^{\otimes n}\left(H|0\rangle\right)^{\otimes m}
%&=|0\rangle^{\otimes n}\left(\frac1{\sqrt{2^m}}\prod_{p=1}^m\sum_{\epsilon_p}|\epsilon_p\rangle\right)\\
&=|0\rangle^{\otimes n}\left(\frac1{\sqrt{M}}\sum_{\epsilon}|\epsilon\rangle\right)
\end{align}
where $\epsilon$ is a bit-string states of length $m$, effectively represents the many directions that the gradient can be taken. 

An oracle query is then implemented which evaluates the function $f$ and records it onto the $n$-qubit register for each of the variations produced by the Hadamard transform of the $m$-qubit register. This would appear as
\begin{align}
&\sum_{\epsilon}|f(\epsilon)\rangle\left(\frac1{\sqrt{M}}|\epsilon\rangle\right)
\end{align}
%, so it is written as a vector.

The oracle produces a phase expansion of the form
\begin{align}\label{QGAphase}
%<<<<<<< HEAD
&\sum_{\epsilon}\left(e^{i2\pi\frac{N}{WL}f\left(\frac LN\left(\epsilon-\frac N2\right)\right)}|f(\epsilon\rangle\right)\left(\frac1{\sqrt{M}}|\epsilon\rangle\right)
\end{align}
where $W$ is the maximum value that the gradient. 
%=======
%&\sum_{\boldsymbol{\epsilon}}\left(e^{i2\pi\frac{N}{ML}f\left(\frac LN\left(\boldsymbol{\epsilon}-\frac{\mathbf{N}}2\right)\right)}|f(\boldsymbol{\epsilon})\rangle\right)\left(\frac1{\sqrt{2^m}}|\boldsymbol{\epsilon}\rangle\right)
%\end{align}
%where $f(\boldsymbol{\epsilon}-\frac{\mathbf{N}}2)$ was the phase result of an oracle with $\mathbf{N}=N(1,1,1,1,1,\ldots)$ \textcolor{blue}{(Comment \# 5)} appears in a complex phase. The factor $M\equiv2^m$  \textcolor{red}{(Comment \# 4 again) } is maximum value of the eventual gradient of $f$ that will be revealed in the next step. The necessary gate that is applied here is imagined to be similar to a quantum phase estimation such that the register conveying $f(\boldsymbol{\epsilon})$ is an eigenphase of that operator in the limit where $L$ is small \textcolor{blue}{(Comment \# 6)}.
%>>>>>>> cf4fd5a2d24f12bb22523e27455d89ea1c4c19bf

If we choose $L$ sufficiently small over a region where $f$ is linear, then we can approximate the function $f$ with a series expansion to first order as
\begin{equation}
f\left(\frac LN\left(\epsilon-\frac{N}2\right)\right)\approx f({0})+\frac LN\left(\epsilon-\frac{{N}}2\right)\cdot{\nabla}f
\end{equation}
plus higher order terms. The leading term $f(0)$ will contribute only a global phase, so it can be safely ignored for a single input eigenfunction $|p\rangle$ because it will be irrelevant on measurement. The terms that depend on $\mathbf{\epsilon}$ can be isolated and written as
\begin{equation}
\sum_{\boldsymbol{\epsilon}}e^{i\frac{2\pi N}{W}{\epsilon}\cdot{\nabla}f}|{\epsilon}\rangle
\end{equation}
where this is a Fourier transform over ${\epsilon}$ of the term ${\nabla}f$.

The final step of the QGA is to apply a QFT which will give
\begin{equation}\label{gradientoutcome}
\bigotimes_{j=1}^d\left|\frac NW\frac{\partial f}{\partial x_j}\right\rangle
\end{equation}
or the components of the derivative. Note that we only required one call to the oracle query, so this represents a notable speedup over the classical method. The QFT can be evaluated in $O(m\log_2 m)$ \cite{hales2000improved} in the best case or more generically as $O(m^2)$ \cite{nielsen2010quantum,coppersmith2002approximate}.
\end{proof}

Oracle-based methods can often produce large overheads to implement the oracle. We show in the following that the oracle is true here. 

The value of $L$ in principle allows us to decompose the input time-evolution operator into smaller components. The necessary gate that is applied here is imagined to be similar to a quantum phase estimation such that the register conveying $f(\epsilon)$ is an eigenphase of that operator in the limit where $L$ is small. The use of phase estimation here is another departure from Ref.~\onlinecite{jordan2005fast}.

We now show what form the operator must take for the logarithm-determinant as an extension of the QGA.

\begin{figure}
\begin{quantikz}[column sep=8, wire types = {n, b}]
&&&&& \lstick{$\ket{0}^{\otimes m}$}&\qwbundle[Strike Height= 0.2]{}\setwiretype{b}&\gate{H^{\otimes m}} \slice{}&\phase{\epsilon}\slice{}&\gate{QFT^{\dagger}}\slice{}&\meter{} \\
\lstick{$\ket{0}^{\otimes n}$}& \qwbundle[Strike Height= 0.2]{} &\gate{\Gamma} \slice{} && \push{\ket{\Psi}}&&&&\gate{U(\epsilon)} \wire[u][1]{q} &&
\end{quantikz}
\caption{Quantum gradient algorithm for the derivatives of eigenvalues on eigenstates. The results of a block Lanczos calculation are used to construct unitaries $\Gamma^{(p)}$ for each of the $p$ eigenvalues to capture a sufficiently converged $S$ sparse-rank representation of the input matrix $X$. The operator $\Gamma^{(p)}$ will depend on a block Lanczos solution that was discussed in Sec.~\ref{R-QBLR}. The remaining step is a quantum phase estimation with rotational gates controlled on the variations $\epsilon$. This produces a phase related to the eigenvalue, the desired result.
\label{QGPE}
}
\end{figure}

\subsection{Quantum gradient phase estimation for gradients}\label{oracle_section}

The oracle query for the QGA--which we refer to as quantum gradient phase estimation (QGPE)--that is shown to be no more costly than QPE--controlled on a parameter ${\epsilon}$ for an input operator $X$. The algorithm is summarized in Fig.~\ref{QGPE}.

\begin{theorem}[Oracle query as a quantum gradient phase estimation]

The oracle query for the quantum gradient algorithm can be phrased as a quantum phase estimation controlled on an equal superposition of bit strings. This gives a scaling of QGPE like QPE of $O(m)\equiv O(\log_2\frac1\varepsilon)$ for $m$ qubits giving precision of $\varepsilon$.

\end{theorem}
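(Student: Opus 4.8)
The plan is to construct the oracle explicitly as a phase-estimation circuit and then inherit its resource count directly. First I would recall the standard action of QPE: for a unitary $U$ with eigenvector $|p\rangle$ satisfying $U|p\rangle=e^{i2\pi\phi_p}|p\rangle$, applying the Hadamard layer, the ladder of controlled-$U^{2^j}$ gates, and an inverse QFT on an $m$-qubit register writes an $m$-bit approximation of the eigenphase $\phi_p$ into that register. The central identification is that the function $f$ demanded by the gradient algorithm is nothing but an eigenvalue $E_p$ of the input operator $X$, and QPE is precisely the primitive that deposits an eigenphase into an ancilla register. I would therefore set the controlled unitary to a time-evolution operator $U(\epsilon)=e^{iX\tau(\epsilon)}$, whose eigenphase on $|p\rangle$ is $E_p\tau(\epsilon)$, and fix the evolution parameter so that this phase reproduces the argument $\frac{N}{WL}f\!\left(\frac LN\left(\epsilon-\frac N2\right)\right)$ appearing in Eq.~\eqref{QGAphase}.

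Second, I would justify the clause ``controlled on an equal superposition of bit strings.'' The Hadamard transform of the $m$-qubit register already prepares $\frac{1}{\sqrt M}\sum_\epsilon|\epsilon\rangle$, which encodes the directions along which the gradient is sampled. Running the phase-estimation step coherently, controlled on $|\epsilon\rangle$, applies $U(\epsilon)$ branch by branch, so each $\epsilon$ accumulates its own eigenphase $E_p(\epsilon)\tau$. Because $X$ is diagonal on $|p\rangle$, every branch commutes with the eigenprojector and the per-branch estimate is undisturbed by the surrounding superposition; this is exactly what lets a single oracle call deliver all variations at once, as the gradient algorithm of Fig.~\ref{QGPE} requires.

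Third, I would invoke the small-$L$ linearity established just before the statement: over a region where $f$ is linear the eigenphase is linear in the sampled displacement, so the controlled evolution followed by the inverse QFT returns the gradient components directly, with the constant term $f(0)$ contributing only a global phase. The resource count then follows from the QPE guarantee: an $m$-qubit phase register resolves the eigenphase to precision $2^{-m}$, so reaching target precision $\varepsilon$ forces $m=O(\log_2\frac1\varepsilon)$, and the dominant cost is the controlled time-evolution ladder together with one inverse QFT. Since the time-evolution operators can be synthesized with arbitrarily small error, appending the $\epsilon$-control adds no more than the usual QPE overhead, yielding the advertised $O(m)\equiv O(\log_2\frac1\varepsilon)$ scaling.

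The step I expect to be the main obstacle is verifying the exact correspondence between the QPE eigenphase and the argument of $f$: one must pin down $\tau(\epsilon)$ together with the normalizations $N$, $W$, $L$ and the offset $\epsilon-\frac N2$ so that the phase written by QPE is precisely what the gradient Fourier transform expects, and confirm that the linearization error stays below $\varepsilon$ uniformly over all sampled $\epsilon$. Establishing that the $\epsilon$-controlled evolution is a bona fide unitary, rather than merely a formal per-branch prescription, and that its synthesis error does not spoil the $2^{-m}$ phase resolution, is where the genuine work lies; everything else is bookkeeping layered on top of the standard QPE analysis.
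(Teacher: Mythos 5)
There is a genuine gap, and it sits at the heart of your construction of the oracle. You place the $\epsilon$-dependence in the evolution \emph{time} of the bare operator, taking $U(\epsilon)=e^{iX\tau(\epsilon)}$ with eigenphase $E_p\,\tau(\epsilon)$. But the function the gradient algorithm needs is not ``an eigenvalue of $X$'' as a fixed number rescaled by $\tau$; it is the eigenvalue of the \emph{displaced} operator as a function of the displacement, $f(\epsilon)=E_p\!\left(X+\tfrac{L\epsilon}{N}\Delta(i,j)\right)$, whose first-order variation $\nabla_{ij}\theta\,\tfrac{L\epsilon}{N}$ is exactly the matrix-element derivative $\partial E_p/\partial x_{ij}$ being sought (Eq.~\eqref{equivlogdet}). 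With your unitary, the phase accumulated on branch $|\epsilon\rangle$ is $E_p\tau(\epsilon)$; if $\tau$ is linear in $\epsilon$, the inverse QFT peaks at a value proportional to $E_p$ itself --- you have built ordinary QPE and will read out the eigenvalue, not its gradient. The perturbation operator $\Delta(i,j)$, which is the only thing in the problem that identifies \emph{which} matrix element is being differentiated, never appears in your circuit, so no choice of $\tau(\epsilon)$, $N$, $W$, $L$ can make the phase reproduce the argument of Eq.~\eqref{QGAphase}: forcing $E_p\tau(\epsilon)$ to equal $\tfrac{2\pi N}{WL}\bigl(E_p+\tfrac{L\epsilon}{N}\nabla_{ij}E_p\bigr)$ would require already knowing $\nabla_{ij}E_p$, which is circular. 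The paper's oracle instead puts $\epsilon$ \emph{inside} the Hamiltonian, $\exp\!\left(\tfrac{2\pi iN}{WL}\bigl(X+\tfrac{L\epsilon}{N}\Delta(i,j)\bigr)\right)$ (Eq.~\eqref{timeEvolve}), so that the controlled phase carries the perturbed eigenvalue $\theta^{(p)}(\epsilon)=\theta_0+\nabla_{ij}\theta\,\tfrac{L\epsilon}{N}+O(L^2)$ (Eq.~\eqref{qgaphase}) and the QFT extracts $\nabla_{ij}\theta^{(p)}$.

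This error also invalidates your commutation remark: you argue the per-branch estimate is undisturbed ``because $X$ is diagonal on $|p\rangle$,'' but in the correct construction the perturbed Hamiltonian generically does \emph{not} commute with $X$, and $|p\rangle$ is no longer an exact eigenvector of it. Controlling the resulting error is not bookkeeping; it is the content of Thm.~\ref{smallLthm}, which shows the spurious terms enter at $O(L)$ and can be suppressed by taking $L$ arbitrarily small. Your final resource count ($m=\log_2\tfrac1\varepsilon$ phase qubits, QPE-like cost) agrees with the paper, but it is inherited from an oracle that, as constructed, computes the wrong quantity.
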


\begin{proof}
Since the derivative of the logarithm-determinant was reduced to the gradient of the eigenvalues in Eq.~\eqref{equivlogdet}, the intention is now to apply a phase rotation gate onto the wavefunction--as is performed in a QPE--which will give the eigenvalues. In order to account for the variation $\epsilon$, the phase rotation gates in the QPE must be controlled on those registers. This will result in the energies with a variation in $\epsilon$.

The variation is applied of the form $\epsilon\Delta(i,j)$ where $\Delta(i,j)$ has 1 on all elements of the input matrix that are to be verified in the derivative $\partial_{x_{ij}}$ which can be taken to be a set of Kronecker delta functions. The remaining values are all 0. The form of $\Delta(i,j)$ can be taken to be symmetric so that the determinant of the input matrix $X$ is commonly defined, although one in principle can extend the concept if necessary.
 
The form of the phase rotation gates in the QPE are taken to have the form
\begin{equation}\label{timeEvolve}
\exp\left({\frac {2\pi iN}{WL}\hat H(\epsilon)}\right)=\exp\left({\frac{2\pi iN}{WL}\left(X+\frac{L\epsilon}N\Delta(i,j)\right)}\right)
\end{equation}
where $\hat H$ can be regarded as a Hamiltonian with time step $2\pi N/(WL)$ with $W$ the maximum value of the gradient. We concede that the exact form of Eq.~\eqref{timeEvolve} may be complicated to execute on the quantum computer, but this form directly extends the QGA.\footnote{Although we have since improved the formulation in Ref.~\onlinecite{ethsum}}. We will quote the computational complexity for this unitary as if we had an efficient form.

In the following, one can imagine taking an expectation value of Eq.~\eqref{timeEvolve} with respect to $|p\rangle$, $\langle p|e^{i2\pi\frac N{WL}\hat H(\epsilon)}|p\rangle$. For the $\Delta(i,j)$ operator here, we consider a form $\Delta(i,j)$ such that a choice number of elements are non-zero. For simplicity, consider only one (and its transpose position to keep the operator of a suitable form to take a unitary time evolution).

The result of the QPE applied onto $|p\rangle$ is to generate a phase $\exp(i2\pi N\theta^{(p)}(\epsilon)/L)$ where the eigenvalue is given as
\begin{equation}\label{qgaphase}
\theta^{(p)}(\epsilon)=\theta_0+\delta\theta+\ldots=\theta_0+\nabla_{ij}\theta\frac{L\epsilon}N+O(L^2)+\ldots
\end{equation}
and applies for any eigenvalue indexed by $p$. For small enough $L$, the condition holds as before. The end result is that a global phase $\exp(i2\pi\theta_0)$ present may be disregarded in the rest of the computation. After taking a QFT, the eigenvalue reduces to a form similar to Eq.~\eqref{gradientoutcome},
\begin{equation}
\left|\frac NW\nabla_{ij}\theta^{(p)}\right\rangle
\end{equation}
where the number of qubits $m$ allowed for the $\epsilon$ deviation determines how precise the answer can be.

The scaling for this step is given entirely by the number of qubits required to represent $\varepsilon$ where $0<\varepsilon<1$. For the case of qubits, we can impose $m=\log_2\frac1\varepsilon$ and use the known resource estimates from QPE.
\end{proof}

There is an additional question of how accurate the gradient is. We demonstrate that for small $L$, the time-evolution operator is essentially error-free.

\begin{theorem}[Small $L$ limit for error-free decomposition]
\label{smallLthm}

Error in the delta function term in the time evolution are less than the decomposition error and therefore create no meaningful error for small $L$, which can be set arbitrarily small.

\end{theorem}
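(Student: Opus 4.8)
The plan is to separate the error incurred when synthesizing the unitary of Eq.~\eqref{timeEvolve} into two contributions and to show that the piece attributable to the delta-function perturbation carries a strictly higher power of $L$ than the piece that is already present when the unperturbed operator $X$ alone is decomposed. Writing the generator as $\hat H(\epsilon)=X+\eta\Delta(i,j)$ with $\eta\equiv L\epsilon/N$ and the evolution ``time'' as $\tau\equiv 2\pi N/(WL)$, the combination $\tau\eta=2\pi\epsilon/W$ is $O(1)$ in $L$, so the total phase imprinted by the perturbation stays bounded even though $\tau$ itself diverges as $1/L$. First I would make this bookkeeping explicit, since it is the source of the apparent tension: the long evolution time could in principle amplify any non-commutativity between $X$ and $\Delta$.

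Next I would bound that non-commutativity error. Treating $X=\sum_a h_a$ as a sum of implementable terms, a product-formula (or BCH) expansion of $e^{i\tau\hat H(\epsilon)}$ generates two families of commutators: the intra-$X$ commutators $[h_a,h_b]$, whose accumulated contribution is by definition the decomposition error $\delta_{\mathrm{decomp}}$ of the unperturbed simulation, and the cross commutators $\eta[h_a,\Delta]$ coming from the delta term. Both families are multiplied by the same $\tau$- and step-count-dependent prefactor, so in the ratio of the two errors that prefactor cancels and one is left with
\begin{equation}
\frac{\text{(delta-term error)}}{\delta_{\mathrm{decomp}}}\sim \eta\,\frac{\|[X,\Delta]\|}{\|[h_a,h_b]\|}=O(L).
\end{equation}
This is the core estimate: the perturbation's prefactor $\eta=L\epsilon/N$ supplies one extra power of $L$ that the unperturbed decomposition does not carry.

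Because the QGPE acts on an eigenstate $|p\rangle$ of $X$ and only its eigenphase is read out, I would corroborate this with Rayleigh--Schr\"odinger perturbation theory, which sidesteps product-formula splitting entirely if $e^{i\tau\hat H(\epsilon)}$ is implemented as a whole, as assumed in the preceding theorem. The eigenvalue $E_p(\epsilon)=E_p+\eta\langle p|\Delta|p\rangle+\eta^2\sum_{q\neq p}|\langle q|\Delta|p\rangle|^2/(E_p-E_q)+\cdots$ has an $n$th-order term scaling as $\eta^n$; multiplied by $\tau$, every contribution beyond the linear one scales as $\tau\eta^{n}=(\tau\eta)\eta^{n-1}=O(L^{n-1})$, so the residual phase error is $O(L)$ and vanishes as $L\to 0$. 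This reproduces the $O(L^2)$ remainder of Eq.~\eqref{qgaphase} once the $N/L$ prefactor is restored, and it matches the commutator estimate.

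Finally I would close the argument by choosing $L$: since the delta-term error is smaller than $\delta_{\mathrm{decomp}}$ by a factor $O(L)$, and $\delta_{\mathrm{decomp}}$ is itself a design parameter that can be driven arbitrarily small, for any target precision one may pick $L$ small enough that the perturbation contributes negligibly to the total budget. The main obstacle is precisely the competition between the diverging time $\tau\sim 1/L$ and the shrinking perturbation $\eta\sim L$; the proof succeeds only because these conspire so that the delta-specific error retains a net positive power of $L$ relative to the baseline. A secondary point to verify is the validity of the expansion, namely that $\eta\|\Delta\|$ remains below the spectral gap at $|p\rangle$ (or that degenerate perturbation theory is invoked when $|p\rangle$ lies in a degenerate subspace), which is again guaranteed for small enough $L$.
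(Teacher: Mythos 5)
Your proposal is correct, and its central estimate coincides with the paper's own proof: the paper expands the eigenphase $\theta^{(p)}(\epsilon)$ as a (multi-variable Taylor) series in the perturbation $L\epsilon\Delta(i,j)/N$, exactly as in Eq.~\eqref{qgaphase}, and then notes that the prefactor $N/(WL)$ of Eq.~\eqref{timeEvolve} turns the order-$o$ term into $O(L^{o-1})$ --- global phase $O(1/L)$ (discarded as irrelevant on measurement), gradient term $O(1)$ (the desired output), and all remainders $O(L)$ or smaller, hence vanishing as $L\to 0$. Your Rayleigh--Schr\"odinger expansion with the bookkeeping $\tau\eta^{n}=(\tau\eta)\,\eta^{n-1}=O(L^{n-1})$ is precisely this argument with the series made explicit. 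You do, however, go beyond the paper in two respects. First, your product-formula/commutator comparison, in which the cross commutators $\eta[h_a,\Delta]$ are down by a factor $\eta=O(L)$ relative to the intra-$X$ commutators $[h_a,h_b]$, is the only part of either argument that actually substantiates the theorem's literal claim that the delta-term error is \emph{less than the decomposition error}; the paper's proof never makes this comparison, and indeed the paper explicitly sets the BCH route aside immediately after the proof as ``less straightforward to determine.'' Second, your caveat that $\eta\|\Delta\|$ must remain below the spectral gap at $|p\rangle$ (or that degenerate perturbation theory be invoked) is a genuine validity condition for the expansion that the paper leaves implicit. Both additions strengthen, and are consistent with, the paper's reasoning; the only soft spot is that your commutator ratio presumes the intra-$X$ decomposition error is nonvanishing (if the $h_a$ mutually commute the ratio is ill-defined), but in that edge case the conclusion still follows from your perturbative estimate alone.
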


\begin{proof}
Given the argument of the exponential operator
\begin{equation}\label{Hexpanded}
\hat H_{ij}=(X)_{ij}+\frac {L\epsilon}N\Delta(i,j)
\end{equation}
we can determine that the expansion of the perturbing term $L\epsilon\Delta(i,j)/N$ follows as a series expansion of the term as seen in Eq.~\eqref{qgaphase} (see Appendix~\ref{multiTaylor} for the full form of a multi-variable series). Higher order terms in Eq.~\eqref{Hexpanded} therefore appear as $O(L^2)$ corrections. The prefactor from Eq.~\eqref{timeEvolve} of $N/(WL)$ will cancel out the extraneous factor giving an overall term of $O(L)$. As $L$ is decreased, the error will vanish. In total, higher order terms will go as $O(L^{o-1})$ for order $o$ and so vanish faster than previous orders. 

In the series expansion, we have a leading term that goes as $O(\frac1L)$, but this first term counts as a global phase and therefore does not affect the final result. We then have the first gradient term as $O(1)$ which is the desired result. The rest of the terms go as, at most, $O(L)$ or larger which can be made small enough not to affect the final result.
\end{proof}

The actual evaluation of the series expansion here would be difficult to do in practice, so it may be useful to think of the term $e^{L\epsilon\Delta/N}$ as modifying the eigenvector $|p\rangle$. Reducing $L$ will guarantee less change in the eigenvector and there is no restriction for how small $L$ can be.

This then guarantees that for sufficiently small $L$ that the error will vanish. One could ponder a similar justification through the Baker-Campbell-Hausdorff formula for the decomposition of an exponential, but it is less straightforward to determine. We will note that practical implementation may limit how small $L$ can be, but theoretically in a perfect implementation, this parameter can be arbitrarily small.

Upon evaluation of the time-evolution operator, we therefore have a state that is nearly an eigenstate and the variation to the new Hamiltonian is a small perturbation. Estimates for real-time evolution that suggest lengthy time evaluations are not applicable in this particular use case because the time-step is infinitesimally small, allowing for arbitrary time-steps without error \cite{poulin2014trotter}. 

\subsection{Expectation values of the inverse}\label{expvalues}

The goal in this section is to see if the computation of the inverse operator can be performed efficiently on the quantum computer without having to measure $O(N^2)$ elements as the previous, single-shot algorithm suggests ({\it i.e.}, all possible combinations of $i$ and $j$ must be obtained in order to find all elements of the inverse matrix). In total, we want to evaluate
\begin{equation}\label{measY}
\langle Y \rangle=\langle\Phi|Y|\Phi\rangle=\sum_{ij}\Phi^*_iy_{ij}\Phi_j
\end{equation}
which scales as $O(N^2)$ on the classical computer as a sub-leading cost to the determination of the inverse at $O(N^3)$. In this section, our interest is on how to evaluate the double sum over $i$ and $j$ in poly-logarithmic time on the quantum computer to do both the inverse (no measured in full) and expectation value.

\begin{figure}
	\begin{algorithm}[H]
		\caption{Quantum gradient of the logarithm-determinant (QGLD)}
		\label{QGLD}
		\begin{algorithmic}[1]
			\State Generate $k$ eigenvectors $\{|p\rangle\}$ of an operator $X$ and their eigenvalues $\{E_p\}$ \Comment{{\it e.g.}, block Lanczos \cite{baker2021lanczos,bakerPRA24}}
			\State Prepare operator with elements $x_{ij}+L\epsilon\Delta(i,j)/N$ controlled on auxiliary qubits $|\epsilon\rangle^{\otimes m}$
			\State Choose $L$ to be sufficiently small
			\State Prepare a state $\Phi$ for the expectation value
			\State $p \gets 1$
			\While{$p < k$}
			\State Prepare deviations $|\epsilon\rangle^{\otimes m}$  \Comment{Eq.~\eqref{deviations}}
			\State QGPE onto wavefunction $|p\rangle$ controlled on $|\epsilon\rangle^{\otimes m}$ to produce the phase $e^{2\pi\epsilon i\delta E_p(x_{ij})}$ \Comment{Eq.~\eqref{timeEvolve}}
			\State $|\delta E_p(x_{ij})\rangle \overset{\mathrm{QFT}}\gets e^{2\pi\epsilon i\delta E_p(x_{ij})}|\epsilon\rangle^{\otimes m}$ and measure or replacing $\delta E(x_{ij})\rightarrow\sum_{ij}\delta E_p(x_{ij})$ for all perturbed entries
			\State $p \gets p +1$
			\EndWhile
			\State \textbf{return} $\langle Y\rangle\gets \sum_{p=1}^k\langle Y^{(p)}\rangle$
		\end{algorithmic}
	\end{algorithm}
\end{figure}

The broad outline of the algorithm is given in Box~\ref{QGLD}.

If $\Phi$ is given as a classical wavefunction, we can immediately implement the coefficients into the QGPE by modifying the perturbing operator $\Delta$ as in the following.

\begin{theorem}[Summation of all derivatives]
\label{sumderivthm}

Assuming a form of 
\begin{equation}\label{outerPhiProd}
\Delta=\frac{L\epsilon}N\boldsymbol{\Phi}^\dagger\cdot\boldsymbol{\Phi}=\frac{L\epsilon}N\left(\begin{array}{cccccccc}
\Phi_1^*\Phi_1 & \Phi_1^*\Phi_2 & \Phi_1^*\Phi_3 & \cdots\\
\Phi_2^*\Phi_1 & \Phi_2^*\Phi_2 & \Phi_2^*\Phi_3 & \ddots\\
\Phi_3^*\Phi_1 & \Phi_3^*\Phi_2 & \Phi_3^*\Phi_3 & \ddots\\
\vdots & \ddots & \ddots & \ddots
\end{array}\right)
\end{equation}
generates the sum of all derivatives. The weights of the derivatives are modified here by the components $\Phi^*_i\Phi_j$, the probability amplitudes of the wavefunction $\Phi$.
\end{theorem}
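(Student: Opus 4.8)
The plan is to exploit the linearity of the first-order eigenvalue shift in the perturbing operator $\Delta$ and then recognize the resulting weighted sum as the expectation value of Eq.~\eqref{measY}. First I would return to the perturbed Hamiltonian of Eq.~\eqref{Hexpanded} and recall that, by the argument already established for the QGPE and culminating in the gradient term of Eq.~\eqref{qgaphase}, the phase deposited on the register is controlled by the first-order shift of the eigenvalue $E_p$ under the perturbation $\frac{L\epsilon}{N}\Delta$ appearing in Eq.~\eqref{timeEvolve}. For the single-element choice $\Delta(i,j)$, the Hellmann--Feynman theorem identifies this shift with $\delta E_p(x_{ij})=\langle p|\Delta(i,j)|p\rangle=\partial_{x_{ij}}E_p$, reproducing exactly the quantity $\delta E_p(x_{ij})$ appearing in Eq.~\eqref{logdetequiv}.

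The central step is then to note that the matrix in Eq.~\eqref{outerPhiProd} is nothing but the rank-one Hermitian operator $|\Phi\rangle\langle\Phi|$ (up to conjugation of conventions) expanded in the computational basis, i.e.\ $\Delta=\sum_{ij}\Phi^*_i\Phi_j\,|i\rangle\langle j|$, which is a linear combination of the single-element perturbations with coefficients $\Phi^*_i\Phi_j$. Because first-order perturbation theory is linear in the perturbing operator, and because Theorem~\ref{smallLthm} guarantees that the neglected $O(L^2)$ contributions vanish as $L$ is taken small, the first-order eigenvalue shift for this $\Delta$ is simply the superposition $\langle p|\Delta|p\rangle=\sum_{ij}\Phi^*_i\Phi_j\,\delta E_p(x_{ij})$. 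Thus a single QGPE run carried out with the operator of Eq.~\eqref{outerPhiProd} deposits the full weighted sum of derivatives in one phase, rather than the single element produced by the bare $\Delta(i,j)$.

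To close the argument I would divide by $E_p$ and sum over the $k$ relevant eigenvectors exactly as in Eq.~\eqref{logdetequiv}, giving $\sum_p \langle p|\Delta|p\rangle/E_p=\sum_{ij}\Phi^*_i\Phi_j\sum_p \delta E_p(x_{ij})/E_p=\sum_{ij}\Phi^*_i\Phi_j\,y_{ij}$, which is precisely $\langle Y\rangle$ of Eq.~\eqref{measY}. I would also remark that $|\Phi\rangle\langle\Phi|$ is manifestly Hermitian, so the replacement preserves the unitarity of the time-evolution operator in Eq.~\eqref{timeEvolve} and requires no further symmetrization of the kind needed for the elementary $\Delta(i,j)$; indeed the off-diagonal pair of weights $\Phi^*_i\Phi_j$ and $\Phi^*_j\Phi_i$ automatically supplies the symmetric combination $|i\rangle\langle j|+|j\rangle\langle i|$ that keeps the determinant of the perturbed matrix well-defined.

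The step I expect to be the main obstacle---or at least the one demanding the most care---is justifying the interchange of the two linear operations, namely the basis-by-basis differentiation producing each $\delta E_p(x_{ij})$ and the subsequent $\Phi^*_i\Phi_j$-weighted summation. This is exactly where the first-order truncation must be controlled uniformly in $\epsilon$, so the whole superposition can be read off a single phase; here I would lean on Theorem~\ref{smallLthm} to suppress the cross terms and on the Hermiticity of $\Delta$ to ensure the construction remains a legitimate generator of a unitary evolution.
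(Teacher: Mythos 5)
Your proposal is correct and follows essentially the same route as the paper's proof: both identify the weighted perturbation as the linear combination $\sum_{ij}\Phi_i^*\Phi_j\,|i\rangle\langle j|$ of elementary perturbations, use the linearity of the first-order eigenvalue shift (the paper phrases this as the expectation value $\langle p|X+\frac{L\epsilon}{N}\Delta|p\rangle$ in a multi-variable series expansion, which is exactly your Hellmann--Feynman step), control the higher-order error through Theorem~\ref{smallLthm}, and finish by dividing by $E_p$ and summing over the $k$ eigenvalues to recover $\langle Y\rangle$ of Eq.~\eqref{measY}. The only cosmetic difference is that the paper first works out the all-ones special case ($\boldsymbol{\Phi}=(1,1,\ldots,1)$) before inserting general amplitudes, whereas you treat general $\Phi$ directly and note the rank-one Hermitian structure explicitly.
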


\begin{proof}
For the sake of simplicity, we can think of the perturbation $\Delta=L\epsilon\mathcal{I}/N$ and define
\begin{equation}
\mathcal{I}\equiv\left(\begin{array}{ccccccc}
1 & 1 & 1 & 1 & 1 & \ldots & 1\\
1 & 1 & 1 & 1 & 1 & \ldots & 1\\
1 & 1 & 1 & 1 & 1 & \ddots & 1\\
\vdots & \vdots & \vdots & \vdots & \ddots & \ddots & \vdots\\
1 & 1 & 1 & 1 & 1 & \ldots & 1
\end{array}\right)
\end{equation}
as a matrix of all ones in every entry. It is natural to wonder why this choice would lead to a sum over all elements $\sum_{ij}\delta E_p(x_{ij})$ for a given eigenvalue $p$, but this case can be analogized with a multi-variable series expansion. Each element of $\hat H_0$ indexed by $i$ and $j$.
\begin{align}
X+\frac{L\epsilon}N\mathcal{I}&=X+\frac{L\epsilon}{N}\left(\sum_i\delta_{ii}+\sum_{i< j}\left(\delta_{ij}+\delta_{ji}\right)\right)\label{deltasum}\\
&=X+\frac{L\epsilon}{N}\sum_{ij}|i\rangle\langle j|
\end{align}
where the second sum in Eq.~\eqref{deltasum} is taken over each pair of $(i,j)$ only once. We now imagine that we want to series expand in several factors of $i$ and $j$ simultaneously. To first order, we can imagine that we can take the multi-variable series expansion of the expression for all possible $i$ and $j$ as the expression
\begin{equation}
\langle p|X+\frac{L\epsilon}N\mathcal{I}|p\rangle=E_p+\langle p|\sum_{ij}\frac{L\epsilon}{N}\Big(|i\rangle\langle j|\Big)|p\rangle
\end{equation}
where we notice that a single value of $E_p$ is present just as in the 0th order of the multi-variable series expansion followed by a term that when expanded in a series will yield a derivative with prefactor $L$. The same analysis follows as from Thm.~\ref{smallLthm}. The first order derivative then appears as before for small enough $L$. The maximum element of the commutator of $X$ and $\Delta$ will be the leading error in the expression and that the factor $L$ must suppress this factor in order to have low error. 

In total, with the operator $\mathcal{I}$, we have obtained the sum of the derivative of all eigenvalues $\delta E_p(x_{ij})$ which can be viewed as evaluating $\langle Y\rangle$ with a vector $\mathbf{\Phi}^\dagger=(1,1,1,1,\ldots,1)$ and adding in eigenvalue elements as we demonstrate next.

Adding in arbitrary values of the classical wavefunction to each element corresponding to $i$ and $j$ gives
\begin{align}
\sum_{ij}\Phi^*_i\Big(\delta E_p(x_{ij})\Big)\Phi_j&=E_p\sum_{ij}\Phi^*_iy^{(p)}_{ij}\Phi_j\\
&\equiv E_p\langle Y^{(p)}\rangle
\label{Yp}
\end{align}
which corresponds to Eq.~\eqref{measY} with the elements $y_{ij}^{(p)}$ corresponding to the $p$th contribution to the inverse element from Eq.~\eqref{logdetfirst}. We then recover the statement of the proof once divided by $E_p$ as in Eq.~\eqref{Yp} and summed, $\langle Y\rangle=\sum_p\langle Y^{(p)}\rangle$. This is the expectation value of the inverse in $O(1)$ applications of QGPE.
\end{proof}

Note that the sum of these values as presented here will be for a given eigenvalue $p$. All $k$ such relevant eigenvalues must be summed to obtain the full result.

We note that if there is difficulty in decomposing the input time-evolution gate, then the operator could be chunked over several runs since the output gradients can all be summed together.

\subsection{Complexity scaling}

The overall computational scaling of the algorithm with the classical wavefunction is therefore $O(-(k/\varepsilon)\log_2 \varepsilon)$--or $O(k/\varepsilon^2)$ with an alternate quantum Fourier transform \cite{nielsen2010quantum,coppersmith2002approximate}--because we require $k$ eigenvectors at $\varepsilon$ precision from a quantum phase estimation scaling as $O(1/\varepsilon)$ \cite{kitaev1995quantum,nielsen2010quantum}. The last step is to take a QFT which scales as Ref.~\cite{hales2000improved}, $O\left(\frac1\varepsilon\log_2\frac1\varepsilon\right)$, in the best case or $O(1/\varepsilon^2)$ in the standard implementation \cite{nielsen2010quantum,coppersmith2002approximate}.

If given a quantum wavefunction, a series of measurements could be performed to obtain the classical coefficients \cite{aaronson2018shadow} or from QAM-sampling \cite{marriott2005quantum,temme2011quantum,baker2020density} to find the coefficients of the end result. 

\begin{theorem}[Practical scaling]
In the case where $X+\frac{L\epsilon}N\Delta$ must be implemented by a decomposition onto $n$ qubits, the scaling of the overall algorithm increases by a logarithmic factor $\log_2 N$--where the input operator is efficiently implementable--to $O(-(k/\varepsilon)\log_2 \varepsilon\log_2N)$.
\end{theorem}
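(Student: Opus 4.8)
The plan is to treat the practical scaling as the idealized scaling multiplied by the per-query cost of compiling the oracle, rather than re-deriving the full complexity from scratch. In the preceding analysis the oracle query was counted as $O(1)$: the controlled time-evolution of Eq.~\eqref{timeEvolve} was treated as a single abstract call. The refinement needed here is to replace that $O(1)$ count with the number of elementary gates required to realize $\exp\!\big(\tfrac{2\pi iN}{WL}(X+\tfrac{L\epsilon}N\Delta)\big)$ on a register of $n=\log_2 N$ physical qubits, and then propagate that factor through the earlier count.

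First I would isolate the compilation cost of the oracle. The operator acts on the $n$-qubit eigenvector register, so any faithful decomposition must touch all $n$ qubits. Under the hypothesis that the input operator is efficiently implementable--that is, sparse or otherwise structured so that its Hamiltonian simulation is polynomial in the register size--the standard simulation bound gives a gate count linear in the number of qubits, $O(n)=O(\log_2 N)$, per evolution step. I would invoke Thm.~\ref{smallLthm} to argue that the perturbing term $\tfrac{L\epsilon}N\Delta$ does not worsen this: for small $L$ it is a controlled, low-weight addition whose contribution to the decomposition is $O(L)$ relative to the $X$ part, so it cannot raise the asymptotic gate count above that set by $X$ alone. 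The same remark covers the outer-product choice of $\Delta$ from Thm.~\ref{sumderivthm}, since only the weights, and not the support structure of the evolution, are altered.

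Next I would recall the idealized complexity $O(-(k/\varepsilon)\log_2\varepsilon)$, which already bundles the $k$ relevant eigenvectors, the $O(1/\varepsilon)$ repetitions inside the phase estimation, and the $O((1/\varepsilon)\log_2(1/\varepsilon))$ cost of the final QFT. Since every invocation of the oracle previously charged as $O(1)$ now carries the compilation factor $O(\log_2 N)$, the overall cost is the product
\begin{equation}
O\!\left(-\frac{k}{\varepsilon}\log_2\varepsilon\right)\times O(\log_2 N)=O\!\left(-\frac{k}{\varepsilon}\log_2\varepsilon\,\log_2 N\right),
\end{equation}
which is the claimed practical scaling.

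The main obstacle is justifying the specific factor $\log_2 N$ rather than a larger polynomial in $n$; this rests entirely on the efficiently-implementable hypothesis, and care is needed to state precisely which class of inputs (for example, sparse $X$, or one admitting an efficient block-encoding) makes the per-step simulation linear in $n$, since a less structured $X$ would replace $\log_2 N$ by a higher power. A secondary subtlety is the bookkeeping against the QFT, which acts only on the $m$-qubit ancilla and therefore does not itself acquire the factor; because the oracle calls dominate and the stated result is an upper bound, charging the factor uniformly across the whole count is sufficient to establish the bound as written.
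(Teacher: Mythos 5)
Your proposal is correct and follows essentially the same route as the paper: charge each oracle call $O(n)=O(\log_2 N)$ gates under the efficient-implementability assumption, use the small-$L$ argument (Thm.~\ref{smallLthm}, equivalently the vanishing commutator term in $e^{i(A+B)t}=e^{iAt}e^{iBt}e^{i\frac{t^2}{2}[A,B]}$ with $t\sim L$) to show the perturbing term and the splitting of the exponential add no error, and multiply the idealized count by this factor. The only cosmetic difference is that the paper names the linear-combination-of-unitaries construction explicitly as the mechanism yielding the $n$-gate decomposition, where you appeal to a generic simulation bound; your closing caveat about which input class actually yields a linear-in-$n$ cost matches the paper's own remark that the theorem assumes $X$ is efficiently encodable.
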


\begin{proof}
A linear combination of unitaries \cite{childs2017quantum,low2019hamiltonian} can be used to decompose the operator in to $n=\log_2N$ gates. The decomposition of the full exponential into several gates, one for each qubit, is guaranteed to be low error with a suitably low $L$ value as demonstrated in Thm.~\ref{smallLthm} but noting that the expansion of the exponential as $e^{i(A+B)t}=e^{iAt}e^{iBt}e^{i\frac{t^2}2[A,B]}$ holds and thus can be done with no error since $t\sim L$. Thus, the number of operations in this case will go as the number of qubits, justifying the extra factor.
\end{proof}

This theorem assumes that $X$ is efficiently encodable on the quantum computer.

\subsection{Discussion}

The poly-logarithmic scaling of this algorithm is dependent on writing a suitable operator controlled on the auxiliary qubits. Note that only one loop is required over $k$ eigenvalues giving the scaling of $O(k)$ in terms of the number of times expectations values must be taken on the quantum computer. This technique is already well studied in many works and the application to gates in the QGPE (which have the same form as time-evolution gates) is already studied, reducing the barrier to implementation for this algorithm. This does mean that well-explored methods to apply unitary gates on the quantum computer can immediately be applied in this algorithm to achieve an efficient implementation.

The existence of this algorithm seems to match the expected optimal scaling by related works on the quantum singular value transformation \cite{chia2022sampling,bakshi2024improved,tang2019quantum}. 

The quantum algorithm is less than classical algorithms that we are aware of by an exponential factor \cite{kulis2006learning} in the case of the practical scaling ($N\rightarrow \log_2N$). Classical algorithms that have pursued similar strategies \cite{bach2013sharp,scetbon2021low} have been reported to have $O(k^2N)$ scaling \cite{kulis2006learning}, meaning that the algorithm here is less by an exponential factor (a ratio of $\frac{\log_2N}{kN}$ for some precision) in the practical case. 

This algorithm can be viewed as a dual algorithm to Ref.~\onlinecite{harrow2009quantum} in the sense that Ref.~\onlinecite{harrow2009quantum} focuses on the input wavefunction rather than the operator \cite{clader2013preconditioned,childs2017quantum,wossnig2018quantum,HHLSurvey,tang2018quantum,chia2020sampling,tang2021quantum,aaronson2015read,costa2022optimal}. The algorithm here is focused on treating the operator itself. We can only speculate on exactly what operators will be best for this algorithm and so cannot comment on exactly how efficient the implementation will be in light of future progress. We do note that writing the operator in a format amenable to the quantum computer may require a different form for the QGPE; we merely followed Ref.~\onlinecite{jordan2005fast}'s form in this paper. 

It has been noted before that similar algorithms \cite{harrow2009quantum} are highly dependent on their state preparation assumptions \cite{tang2019quantum}, and this dual algorithm is likely highly dependent on the ability to implement input time-evolution gate operators in an efficient way. However, for many cases of interest, this problem is a common feature of many algorithms, so improved strategies with implementing time-evolution gates is likely to occur in other areas and may enable more use applications here. We suggest that this feature, being common to many other algorithms, may have a clearer path to improvement over the wavefunction preparation techniques required for other algorithms \cite{tang2019quantum,aaronson2015read}. The gate depth will be entirely dependent on the (controlled) gates that are applied in acting the operators of QGPE onto the quantum machine and their implementation \cite{mottonen2004transformation}. 

We further note that Ref.~\onlinecite{harrow2009quantum} makes use of a sparsity assumption for the input operator. We can in principle do so here, but may have to compute many derivatives regardless of the sparsity assumption. A block form of the input operator would be a natural equivalent here.

\section{Applications to sparse-rank input: kernel-based machine learning}

There are several applications for which the inverse of a matrix would be useful. We list a few interesting ones, although this would not constitute an exhaustive list. We start where the algorithm is best ({\it i.e.}, when the problem is machine-learnable and has few eigenstates necessary for a complete description). It is then also pointed out that the problem can be applied on non-sparse-rank matrices for interacting quantum-many problems and beyond.

\begin{definition}[Machine-learnable]
An input matrix is defined to be machine-learnable when the sum of the $k\lll N$ largest eigenvalues has a small difference from the sum over all eigenvalues.
\end{definition} 

We will give a specific case here for the learning of a kernel-based machine learned function \cite{hofmann2008kernel,davis2007information}, but in general the expressions can be more generic.

One can construct a kernel matrix $\mathbf{K}$ between the data points $x_j$ with elements
\begin{equation}
K_{ij}=\kappa(x_i,x_j,\sigma)
\end{equation}
where an example might be, if choosing an exponential function to be $\kappa(x_i,x_j,\sigma)=\exp(-\|x_i-x_j\|^2/\sigma^2)$ where any function may be chosen in general and there can be more hyperparameters than the single $\sigma$ given here.

The machine-learned approximation is given by \cite{liPRB16}
\begin{equation}
f^\mathrm{ML}(x)=\sum_{j=1}^{N_T}\alpha_j\kappa(x,x_j,\sigma)
\end{equation}
for parameters $\alpha_j$ that we will determine by first constructing a cost function to be minimized of the form
\begin{equation}
\mathcal{C}(\boldsymbol{\alpha})=\sum_{j=1}^{N_T}\left(f^\mathrm{ML}(x_j)-f_j\right)^2-\lambda\boldsymbol{\alpha}^T\mathbf{K}\boldsymbol{\alpha}
\end{equation}
where the elements of $\boldsymbol{\alpha}$ are $\alpha_i$ and are solved by
\begin{equation}
\boldsymbol{\alpha}=(\mathbf{K}+\lambda\mathbb{I})^{-1}\boldsymbol{f}
\end{equation}
which requires the inverse of a matrix and training points $f_j$ as elements of $\boldsymbol{f}$. The hyper-parameters $\sigma$ and $\lambda$ must be chosen to fit the model properly ({\it i.e.}, avoid over-fitting). In general, one may construct a more generic cost function to minimize.

The appearance of an inverse is exactly what we wish to solve with the new algorithm proposed in this paper. The number of relevant principal components in the kernel matrix depends on the application, but in general, we can say that a system is machine-learnable if the number of principal components is small.

%The implementation on the quantum computer to achieve the best scaling case would use $\mathbf{K}+\lambda\mathbb{I}$ as the input $X$ and load $\boldsymbol{f}$ onto the right-vector in Eq.~\eqref{outerPhiProd}. The left-vector, depending on what resource constraints are allowed, could be either a function $\mathbf{g}\cdot\mathbf{K}$ or if the inverse of $\mathbf{K}$ is known then that could be applied onto the original function.

\begin{lemma}[Super-linear convergence]

Given an input matrix $X$ such that the singular value $\sigma_i\rightarrow0$ as $i$ increases where $i$ indexes the singular values in order. Given $S$ determined out to an order $k$ sparse-rank factorization, $S^{(k)}$ (where $S^{(N)}=X$), then the approximate singular values $\tilde\sigma_i(S^{(k)})$ tend to
\begin{equation}
\tilde\sigma_i(S^{(k)})\rightarrow\sigma_i
\end{equation}
superlinearly in $k$ ({\it i.e.}, as a high degree polynomial in $k$).

\end{lemma}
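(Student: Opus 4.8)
The plan is to identify the approximate singular values $\tilde\sigma_i(S^{(k)})$ as Ritz values of a Rayleigh--Ritz projection onto the order-$k$ block Krylov subspace generated by the recursion of Box~\ref{RQBL}, and then to invoke the convergence theory of Krylov methods. First I would reduce the singular-value problem to a symmetric eigenvalue problem, either by passing to $X^\dagger X$ or, when $X$ is Hermitian, by working with $X$ directly, so that $\tilde\sigma_i$ becomes a Ritz value of a Hermitian operator restricted to $\mathcal{K}_k=\mathrm{span}\{\boldsymbol{\Psi}_0,X\boldsymbol{\Psi}_0,\ldots,X^{k-1}\boldsymbol{\Psi}_0\}$. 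Since every vector of $\mathcal{K}_k$ is a degree-$(k-1)$ polynomial in $X$ applied to the block seed, the Courant--Fischer min--max characterization of the error $\sigma_i-\tilde\sigma_i$ turns into a polynomial approximation problem over matrix-valued polynomials of bounded degree.

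The base estimate is the block Kaniel--Paige--Saad bound. Choosing the optimal polynomial to be a shifted and scaled Chebyshev polynomial $T_{b(k-i)}$ of the appropriate degree gives
\begin{equation}
0\le\sigma_i-\tilde\sigma_i(S^{(k)})\le C_i\left(\frac{1}{T_{b(k-i)}(\gamma_i)}\right)^2,
\end{equation}
with $\gamma_i=1+2(\sigma_i-\sigma_{i+1})/(\sigma_{i+1}-\sigma_N)$ the relative spectral gap and $C_i$ a constant collecting the angle between the seed and the $i$-th singular subspace. Because $T_n(\gamma)\sim\tfrac12(\gamma+\sqrt{\gamma^2-1})^{\,n}$ grows geometrically for $\gamma>1$, this already yields a geometric (linear) rate, and the decay hypothesis $\sigma_i\to0$ is exactly what guarantees $\gamma_i>1$ with a favourable gap, since the leading singular values stand clear of the vanishing tail.

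To promote linear to \emph{super-linear} convergence I would invoke the deflation mechanism of van der Sluis and van der Vorst: once the outer values $\sigma_1,\ldots,\sigma_{i-1}$ have been captured to high accuracy after a few steps, the Krylov process effectively acts on the deflated spectrum $\{\sigma_i,\sigma_{i+1},\ldots\}$, whose relative gap is strictly larger than before, so the Chebyshev rate itself improves as $k$ grows. Composing this self-accelerating sequence of bounds yields convergence faster than any fixed geometric rate, i.e.\ the claimed super-linear behaviour. The main obstacle is three-fold: the block structure forces the scalar Chebyshev argument to be replaced by its matrix-valued analogue (this is also what lifts the effective degree from $k-i$ to $b(k-i)$ and drives the acceleration); one must control finite-precision loss of orthogonality in the recursion, which I would handle either by assuming exact arithmetic with reorthogonalization or by appealing to the backward-stability analysis of Refs.~\onlinecite{baker2021lanczos,bakerPRA24}; and finally the deflation argument must be made quantitative for a smoothly decaying singular spectrum rather than for a handful of isolated eigenvalues, which is where the bulk of the technical work would lie.
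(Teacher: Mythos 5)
The paper does not actually prove this lemma: consistent with its stated convention that lemmas denote results proven in other papers, it simply cites Theorem III.5 of Ref.~\onlinecite{yuan2018superlinear} (with supporting citations to Refs.~\onlinecite{yuan2018convergence,saad1980rates,li2015convergence}). Your proposal is therefore an attempt to reconstruct that external proof, and your toolkit---reduction to a Hermitian eigenproblem, Rayleigh--Ritz on the block Krylov subspace, Kaniel--Paige--Saad Chebyshev bounds, and a deflation mechanism to upgrade geometric to superlinear convergence---is the same family of techniques on which the cited literature is built, so the overall strategy is sound.

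As a proof, however, the proposal has a genuine gap exactly where the lemma lives: its entire content is \emph{superlinearity}, and your deflation step (``composing this self-accelerating sequence of bounds'') is precisely the part you defer as ``the bulk of the technical work.'' The Chebyshev/KPS bound by itself yields only a fixed geometric rate; showing that the effective spectral gap improves with $k$ fast enough that the composed rates beat every fixed geometric sequence---for a smoothly decaying singular spectrum rather than a few isolated outliers, and with a randomized block seed---is the substance of Theorem III.5 of Ref.~\onlinecite{yuan2018superlinear}, not a routine corollary of the classical machinery. There is also a concrete technical error: in block Lanczos with block size $b$ after $k$ iterations, the polynomial degree available to the min--max argument is still $O(k)$, not $b(k-i)$; the benefit of blocking enters through the gap (one compares $\sigma_i$ to $\sigma_{i+b}$ instead of $\sigma_{i+1}$), so the ``degree amplification'' you rely on to drive the acceleration should be removed and the improvement attributed to the gap structure instead.
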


The proof is contained in Ref.~\onlinecite{yuan2018superlinear} (Theorem III.5 of that paper) \cite{yuan2018convergence,saad1980rates,li2015convergence}.

This implies that the choice of block Lanczos as a starting point for kernel-based machine learning is highly efficient with respect to the number $k$ eigenvectors that need to be supplied. This algorithm would result in much larger systems being able to be machine-learned.

We also note that the methods here once implemented would allow for the training of Bayesian machine learning algorithms to predict likely outcomes from the machine-learning model \cite{fitzsimons2017bayesian,zhao2019bayesian,mackenzie2014bayesian,BayesianDeepLearning}.

We note that at least $k$ relevant eigenvectors must be present for the algorithm to obtain an answer in general. We find the most useful complexity proof demonstrating this to be native to machine learning. It has been proven in Ref.~\onlinecite{servedio2004equivalences} (with concurring results in at least Ref.~\onlinecite{arunachalam2018optimal} and Ref.~\onlinecite{huang2021information}) that the number of oracle queries for a quantum machine learning algorithm are equal to the number of training points on the classical computer ({\it i.e.}, quantum computers store no more correlations than are provided to it; same for the classical computer). So, at least this much information must be provided in the wavefunction, else the proof from Ref.~\onlinecite{servedio2004equivalences} would need modification and imply that quantum machine learning is not restricted by the probably approximately correct (PAC) model of learning at least \cite{valiant1984theory,amsterdam1988valiant}. 

\section{Extensions for other state preparation assumptions}\label{sumQGLDsec}

\begin{figure}
	\begin{algorithm}[H]
		\caption{Summed Quantum gradient of the logarithm-determinant ($\Sigma$-QGLD)}
		\label{sumQGLD}
		\begin{algorithmic}[1]
			\State Prepare a register of an equal superposition of $k$ eigenstates \Comment{Eq.~\eqref{equalsuperposition}}
			\State Generate the factor $1/E_p$ on $\epsilon$ \Comment{Eq.~\eqref{EpPhase}}
			\State Run an inverse time evolution \Comment{Eq.~\eqref{invtimeEvolve}}
			\State Apply $\hat U_c$ as in the QGLD algorithm but on the superposition of all eigenstates to generate $\delta E(x_{ij})$ for all required elements \Comment{Eq.~\eqref{Uc}}
			\State  \textbf{return} Measure the expectation value of some operator to obtain the sum over all $k$
		\end{algorithmic}
	\end{algorithm}
\end{figure}

So far, we have made conservative assumptions about the preparation of the initial wavefunction. This forced the resulting algorithm to scale as $O(k)$. It is natural to wonder if the algorithm could be reduced in scaling from $O(k)$ to $O(1)$. We discuss this possibility here if we are given an equal superposition of eigenstates.  We require 
\begin{equation}\label{equalsuperposition}
|\Psi\rangle=\sum_{p=1}^Nc_p|p\rangle
\end{equation}
with $c_p=1/\sqrt{2^n}$ $\forall p$. This is not the same as an equal superposition of all states from a Hadamard transformation.

We note that this is related to the same wavefunction preparation assumptions as other works, namely Ref.~\onlinecite{harrow2009quantum} \cite{tang2018quantum,chia2020sampling,tang2021quantum,aaronson2015read}. If a suitable set of preparation assumptions can be imposed (potentially with a quantum memory), then we will explore that possibility in this section.

If the eigenvalues are known, then this procedure can become simpler. One could in principle construct a Hadamard gate in the basis of the eigenfunctions and prepare the state accordingly.

Our goal is only to illustrate the steps required to generate the gradients in a superposition as a summed-QGLD ($\Sigma$-QGLD) which is summarized in Box~\ref{sumQGLD}.

\subsection{Preparation of an equal superposition of eigenstates}

Considering the wavefunction $|\Psi\rangle$, take a QPE to find
\begin{equation}
\hat U_\mathrm{QFT}^\dagger\hat U_\mathrm{QPE}|\Psi\rangle|0\rangle=\frac1{\sqrt{2^n}}\sum_{p=1}^N|p\rangle|E_p\rangle
\end{equation}
where a QFT was applied to bring the phase into an auxiliary register. The number of qubits for the auxiliary register is $m$, a small number to store the precision of the eigenvalue $E_p$.

%Note that determination of the eigenvalues classically is $O(N^3)$, but the cost may be worthwhile when considering taking the inverse of the input matrix. In those cases where the input matrix is tridiagonal, one can reduce this to $O(N^2)$. General inputs can be reduced to the tridiagonal form by use of a Hessenberg transformation \cite{press1992numerical}, which is formally $O(N^2)$, but in cases where the input Hamiltonian is sparse, then we can potentially find the eigenvalues in $O(N^2)$.  Note that the Hessenberg matrix is only applied on the first step, followed by lower cost operations. This may also be possible with fast matrix-multiplication algorithms to achieve this scaling for generic matrices \cite{coppersmith1987matrix}.

%We do note the recent appearance of some methods to find excitation energies on the quantum computer that may be useful \cite{barison2024quantum,shajan2024towards,bakerPRA24,baker2024direct}.

\subsection{Inverse phase generation}

When performing the the QGLD in superposition, the eigenvalue must be divided by the perturbing value $\epsilon\rightarrow\epsilon/E_p$ as in Eq.~\eqref{equivlogdet}. That is, we need 
\begin{equation}\label{EpPhase}
e^{i\frac N{WL}\left(\hat H+\frac{L\epsilon}N\Delta\right)}\rightarrow e^{i\frac N{WL}\left(\hat H+\frac{L\epsilon}{E_pN}\Delta\right)}
\end{equation}
for a given eigenvalue $E_p$ of the Hamiltonian.  Even if this procedure is more expensive than normal, or requires a detailed implementation \cite{thapliyal2019quantum}, then the rest of the algorithm is small enough to justify the overhead. Note that the eigenbasis of the inverse operator is the same for the input operator, $X|p\rangle=E_p|p\rangle\Leftrightarrow Y|p\rangle=\frac1{E_p}|p\rangle$.

\subsection{Eigenphase oscillation in superposition}

%We now consider a set of states $k$ where in principle we can prepare a state
%\begin{equation}\label{equalsuperposition}
%|\Psi\rangle=\frac1{\sqrt{2^{\log_2k}}}\sum_{p=1}^k|p\rangle
%\end{equation}
%where $k$ can be any number up to the size of the Hilbert space, $k=N$. All states in the superposition must have equal weight, which differs from a general spectral decomposition of an arbitrary wavefunction into all eigenstates. 

If Eq.~\eqref{equalsuperposition} is provided, then we can evaluate the entire sum over all $k$ in a single oracle query of the QGLD. Upon evaluation of the superposition, there will be no uniform global phase. Instead, terms like $\exp(2\pi i(\theta^{(p)}_0-\theta^{(p')}_0))$ will appear and prevent the proper evaluation of the measurement. Instead, we must cancel the phase on each eigenstate individually in the sum.

To cancel the phases on each state, we can apply an inverse time-evolution operation of the form
\begin{equation}\label{invtimeEvolve}
\hat U(t)=\exp\left(-\frac {2\pi iN}{WL}X\right)
\end{equation}
which will give 
\begin{equation}\label{equalsuperposition_phase}
|\Psi\rangle=\frac1{\sqrt{2^n}}\sum_{p=1}^Ne^{-{\frac {2\pi iN}{WL}\theta^{(p)}_0}}|p\rangle
\end{equation}
and cancel the (``global")\footnote{But now the previously defined ``global phase" is inherent to each eigenstate, meaning it is not so global in this case. It is different for each eigenstate in the superposition.} phase on each state $p$ when applying Eq.~\eqref{timeEvolve}. %This will give the sum over all $k$ input states.

\subsection{Complexity}

The overall scaling of the method reduces to $O\left(\sigma\right)$ (ideal) or $O\left(\sigma\log_2N\right)$ (practical) with $\sigma=\frac1{\varepsilon^2}$ or $-\frac1{\varepsilon}\log_2\varepsilon$ depending on the QFT used \cite{hales2000improved,kitaev1995quantum,nielsen2010quantum}, a reduction by a factor of $k$ from the QGLD. This scaling likely hides the complexity of preparing the wavefunction \cite{aaronson2015read} and the inclusion of the inverse eigenvalue as discussed in the previous paragraph. This may require a quantum RAM (QRAM), although it has been noted the difficulties in using this device and constructing it.

\subsection{Application of $\Sigma$-QGLD}

This method is perhaps best applied when the eigenstates are known, such as for the uniform gas. In this case, we take a collection of fermions in the unit cell of an infinite system \cite{fetter2012quantum}. The external potential is set to cancel a divergence in the size-extensive exchange term (although, this can be taken to be zero for all practical purposes), and the eigenstates are considered to be plane waves $\exp(i\mathbf{k}\cdot\mathbf{r})$. This has been explored on the quantum computer before \cite{ivanov2024quantum,bauer2020quantum,babbush2018low,mcclean2020openfermion}. In this case, we can prepare the set of eigenstates on the quantum computer through a unitary operation.

In this case, we can then evaluate functional derivatives of the quantum field theory as in Eq.~\eqref{correlationfct} by using $\Sigma$-QGLD as an oracle query to the QGA. This will evaluate the problem over a set number of qubits we use on the quantum computer. This implies a plane-wave cutoff that scales as the number of qubits \cite{helgaker2014molecular}. In principle, one could evaluate for exchange and correlation function.\footnote{For example, see Eq.~\eqref{interactingGreensfct}.} Some existing results can be checked against quantum Monte Carlo studies \cite{umrigar1994accurate}.

Annealing the eigenstates of the uniform gas to a slightly-perturbed Hamiltonian would allow for study on any Fermi-liquid \cite{fetter2012quantum}.

\section{Conclusion}

Logarithm-determinants were shown to be efficiently computable on the quantum computer from the knowledge of the derivative of the eigenvalue and the eigenvalue itself if extremal eigenstates are known. The logarithm-determinant's derivatives apply widely in both field theoretic formulations of physics as well as machine learning.

A multi-variable series expansion was derived from a quantum gradient algorithm. It was shown that applying a time-evolution operator in $O(1)$ query time can generate the contribution from a single eigenvalue to the expectation value of an inverse operator. A sum over $k$ such eigenvalues is required in order to obtain the full result. The efficient implementation of the operator used here--based on the quantum gradient algorithm--will be considered in future studies.

%The time-evolution operator can be decomposed with the choice of a free parameter that can be reduced to an infinitesimally small time-step, making the algorithm implementable on near-term quantum computers as well as fully error-corrected machines. 

For practical implementation, the input matrix into the argument of the time-evolution operator can be written as a linear combination of unitaries and decomposed to give a poly-logarithmic scaling algorithm in terms of the size of the input matrix, $\log_2N$. The full scaling is therefore $O\left(\frac k{\varepsilon^2}\log_2N\right)$ for precision $\varepsilon$. We also provided the method extended to a superposition of eigenstates provided that the initial equal superposition of states can be created.

In those cases where the number of eigenvectors needed to well-describe the input operator is small, the sparse-rank factorization of the matrix is known to converge super-linearly in the rank of the matrix for sufficiently decaying singular values, implying that the method is very efficient in those cases. One potential use case is therefore quantum machine-learning.

The goal was to show that the logarithm-determinant can be solved on the quantum computer. There is no claim about the ultimate performance with respect to all possible unknown classical algorithms \cite{tang2018quantum,chia2020sampling,tang2021quantum,aaronson2015read}.

\section{Acknowledgements}

We thank Olivia Di Matteo, Nishanth Baskaran, Sarah Huber, Michael O.~Flynn, Arif Babul, and Emmanuel Fromager for useful comments. T.E.B. is forever indebted to David Poulin for the original inspiration and time that led to this work. 

J.G.~acknowledges the NSERC CREATE in Quantum Computing Program (Grant Number 543245). J.G.~is supported in part by funding from the Digital Research Alliance of Canada.

T.E.B. is grateful to the US-UK Fulbright Commission for financial support and being hosted by the University of York.  This research was undertaken in part thanks to funding from the Bureau of Education and Cultural Affairs from the United States Department of State.

This work has been supported in part by the Natural Sciences and Engineering Research Council of Canada (NSERC) under grants RGPIN-2023-05510, DGECR-2023-00026, and ALLRP 590857-23. 

This research was undertaken, in part, thanks to funding from the Canada Research Chairs Program (CRC-2021-00257). The Chair position in the area of Quantum Computing for Modelling of Molecules and Materials is hosted by the Departments of Physics \& Astronomy and of Chemistry at the University of Victoria. 

This work is made possible in part with support from the University of Victoria's start-up grant from the Faculty of Science. 

Some figures were prepared with the Quantikz package \cite{kay2018tutorial}.

\paragraph*{Note added after submission.--} The final result in this paper of the preparation of an equal superposition of eigenstates was effectively accomplished with an efficient operator implementation in Ref.~\onlinecite{ethsum}.

\section{Author Contributions}

J.G.~performed all implementation steps. T.E.B.~developed the project. The article was co-written and authors are listed alphabetically.

\begin{appendix}

\section{Logarithm-determinant in physics}\label{logdet_phys}

The logarithm-determinant appears in many places in the study of physics. Considering the generic expression for a partition function or generating function of the form
\begin{equation}
Z=\int\mathcal{D}\varphi \;e^{i\mathcal{S}[\varphi]}
\end{equation}
where the functional integral is over all possible fields, $\varphi$ and $\mathcal{S}[\varphi]=\int\mathcal{L}(\varphi,\dot\varphi,t)dt$ for a given Lagrangian, $\mathcal{L}$.  The Lagrangian may describe a classical or quantum system at zero or finite temperature $T$. The quantity $Z$ can be evaluated in a basis and then recast as a matrix, albeit one that is practically exponentially large in order to obtain accurate results. 

Knowledge of $Z$ can be used to obtain thermodynamic quantities which are related by the derivative of the logarithm to obtain \cite{reif2009fundamentals}
\begin{equation}
E=-\frac{\partial}{\partial\beta}\ln Z
\end{equation}
the energy of a system for one example, for an inverse temperature $\beta=1/T$ (allowing units such that the Boltzmann constant is 1). In general, taking the derivative of the partition function can give the conjugate variables in thermodynamics \cite{reif2009fundamentals}.

In the context of a field theory, one can generate correlation functions by taking successive derivatives of source fields that are added to the Lagrangian ({\it i.e.}, $J\varphi$). Taking functional derivatives of the source fields $J$ such as \cite{schwartz2014quantum,peskin1995quantum}
\begin{equation}\label{correlationfct}
\left.\frac{\partial}{\partial J(x)}\left(\frac{\partial}{\partial J(y)}\ln Z\right)\right|_{J=0}=\langle\hat c^\dagger_{i\sigma}(x)\hat c_{j\sigma}(y)\rangle
\end{equation}
would be the Green's function when $J=0$. 

Fully interacting Green's functions can also be obtained from \cite{fetter2012quantum,abrikosov2012methods}
\begin{equation}\label{interactingGreensfct}
\mathcal{G}_{ij}(\omega)=\langle\Psi|\hat c^\dagger_{i\sigma}(\omega-\hat H\pm i\eta)^{-1}\hat c_{j\sigma}|\Psi\rangle
\end{equation}
where $\eta\rightarrow0$. The appearance of the inverse of an operator aligns with the scope of the main results.

We note one small equivalence in the literature that is equivalent to the logarithm-determinant.

\begin{proposition}[Trace-logarithm form]
The logarithm-determinant is equvialent to the trace of the logarithm of an operator,
\begin{equation}
\ln\det X=\mathrm{Tr}\ln X
\end{equation}
\end{proposition}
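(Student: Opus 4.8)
The plan is to prove the identity by diagonalizing $X$ and evaluating both sides in its eigenbasis, exactly as was done for $\det X$ in the proof following Eq.~\eqref{sparserankX}. Writing $X=\sum_{p=1}^N E_p|p\rangle\langle p|$, the determinant is the product of eigenvalues, $\det X=\prod_p E_p$, so that $\ln\det X=\sum_p\ln E_p$. The matrix logarithm is defined through the same spectral decomposition, $\ln X=\sum_p(\ln E_p)|p\rangle\langle p|$, and its trace---being basis-independent---collects precisely the diagonal entries $\ln E_p$. Hence $\mathrm{Tr}\ln X=\sum_p\ln E_p$, which matches $\ln\det X$ and completes the argument. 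Since the paper's convention (per the footnote) reserves \emph{proposition} for statements verified without a detailed proof, a short spectral argument of this kind is the natural level of detail.

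An alternative and more robust route avoids any diagonalizability assumption by invoking the companion identity $\det(e^M)=e^{\mathrm{Tr}\,M}$. Setting $M=\ln X$ so that $X=e^M$, one obtains $\det X=e^{\mathrm{Tr}\ln X}$, and taking the logarithm of both sides yields the claim. The identity $\det(e^M)=e^{\mathrm{Tr}\,M}$ can itself be established by placing $M$ in upper-triangular (Schur) form, for which both the trace and the log-determinant are read directly off the diagonal, and then noting that exponentiation acts on the diagonal entries as ordinary exponentiation. I would present the eigenbasis version as the main proof, since it matches the paper's established notation, and mention the Schur-form version as the general-case fallback.

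The main obstacle is the careful treatment of the matrix logarithm rather than the algebra of traces and products. One must fix a branch of $\ln$ so that $\ln X$ is well-defined; for eigenvalues on the negative real axis or elsewhere in the complex plane this requires a consistent branch cut, and the identity then holds only modulo the usual $2\pi i$ ambiguity in the imaginary part. For the Hermitian, positive-definite operators relevant to the partition functions and kernel matrices considered here the eigenvalues are real and positive, so the principal branch applies and no ambiguity arises; the sign subtleties for negative eigenvalues are handled separately in Appendix~\ref{neg_eigvals}. A secondary, purely technical point is the non-diagonalizable case, which the Schur-form route of the second argument dispatches cleanly, so it need not be addressed by the spectral decomposition explicitly.
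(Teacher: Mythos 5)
Your main argument---substituting the spectral decomposition of Eq.~\eqref{sparserankX}, reading off $\ln\det X=\sum_p\ln E_p$ and $\mathrm{Tr}\ln X=\sum_p\ln E_p$, and matching the two---is exactly the paper's proof, which states only that one substitutes the eigenvalue decomposition and rearranges. Your Schur-form alternative and the branch-cut discussion are correct but optional extras beyond the paper's one-line argument.
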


\begin{proof}
This is straightforwardly derived by substituting an eigenvalue decomposition--as in Eq.~\eqref{sparserankX}--and rearranging.
\end{proof}

\section{Negative determinants and eigenvalues}\label{neg_eigvals}

There is a question of the applicability of Eq.~\eqref{logdetfirst} when the determinant of a matrix is negative. The natural logarithm is defined only in the complex plane by $\ln z=\ln|z|+i\theta$ where $\theta$ is chosen as a branch cut arbitrarily. Often, this is chosen to be along the negative real axis, but this can be chosen elsewhere. In our application here, we can avoid the appearance of the imaginary number by noting that the derivative in Eq.~\eqref{logdetfirst} will cause the imaginary number to be zero on evaluation.

Even if the negative determinant appears, the evaluation from Eq.~\eqref{logdetequiv} still holds. As an example, considering the case of $X=\sigma^z$, we have
\begin{equation}
\sigma^z=\left(\begin{array}{cc}
1 & 0\\
0 & -1
\end{array}\right)
\end{equation}
with eigenvectors $|0\rangle=(1,0)^T$ and $|1\rangle=(0,1)^T$. The evaluation of $\langle p|e^{2\pi i\frac N{WL}(Z+\frac{L\epsilon}N\Delta)}|p\rangle$. will still produce the correct result despite having negative eigenvalues. The overall sum is zero since $(\sigma^z)^{-1}=\sigma^z$ leading to a full sum being 0 over all $p$. However, we only find positive derivatives of the derivatives of the eigenvalues. We must have $E_p=\pm1$ in the alternative form of Eq.~\eqref{logdetequiv}. Thus, a negative determinant can be handled by the predominant form that we use from Eq.~\eqref{logdetequiv}.

\section{Random quantum block Lanczos recursion}\label{R-QBLR}

We preferred to use a randomized quantum block Lanczos (RQBL) algorithm \cite{yuan2018superlinear,musco2015randomized,wang2015improved,eberly1997randomized,rokhlin2010randomized,halko2011algorithm,gu2015subspace,tropp2022randomized,baglama2006restarted} as a starting point for the QGLD for sparse-rank inputs. Lanczos methods were proposed for the quantum computer \cite{baker2021lanczos} because of their rapid convergence \cite{saad1980rates}, ability to obtain the ground state in consequently fewer operations, and their ability to compute the continued fraction representation of the fully interacting Green's function \cite{baker2021lanczos}. It was also shown that the same techniques using qubitization \cite{low2019hamiltonian,berry2015simulating,kothari2014efficient,childs2017quantum,bakerPRA24} can be used to avoid Trotter decomposition. A block Lanczos algorithm can be implemented and this is advantageous because the cost is equal to square the number of states in each block and the algorithm tends to be more stable \cite{bakerPRA24}. 

Lanczos methods have recently caught the imagination of the quantum computing development community. After the original proposal with block encodings in Ref.~\onlinecite{baker2021lanczos}, there was a nearly immediate demonstration on a real quantum computer of a continued fraction in Ref.~\onlinecite{jamet2021krylov}. The extension to block Lanczos techniques followed shortly after that \cite{bakerPRA24}, noting that the method should be more stable than the scalar Lanczos case based on error estimates. An in-depth summary of the method covered existing strategies in Ref.~\onlinecite{kirby2023exact}--reviewing in detail that the implementation on the quantum computer is efficient--with some discussion of error in Ref.~\onlinecite{kirby2024analysis}. The problem has recently been applied with near-term quantum technologies to large spin chains \cite{yoshioka2024diagonalization} and lattice descriptions of quantum field theories \cite{anderson2024solving}. Some other implementations of Lanczos methods use Trotter decompositions of a time evolution operator \cite{motta2020determining,stair2020multireference,klymko2022real,seki2021quantum}. Some other methods have investigated Davidson methods for excited states \cite{tkachenko2024quantum}.

Even though quantum Lanczos is roughly the same as the classical version, the main justification for its implementation is that the quantum computer bypasses the exponential increase in the storage requirements for the wavefunction. The discussion here will be for the fully error-corrected implementation of quantum computers, just as previous work has been framed \cite{baker2021lanczos}. The limitation of precision on the quantum computer is the uncertainty limit which allows for a more stable Lanczos algorithm and avoids the accumulation of round-off error that appears in classical computations. We note this implies that the wavefunction preparation here is not exponentially costly, a bottleneck of some other algorithms \cite{tang2021quantum,aaronson2015read}. This is ultimately predicated on having only a few relevant eigenvectors necessary for the problem at hand for the QGLD.

We note that other methods of determining excited states could be used as the generator for the eigenvectors.

We consider a set of eigenvectors that is generated by a block-Lanczos algorithm, although any appropriate subspace could be chosen and this algorithm will immediately generaize to that case as well. Previous formulations of this algorithm \cite{bakerPRA24} have focused on starting from an eigenstate. The key difference here is that this algorithm, quantum randomized block Lanczos, will start from a random state prepared on the quantum computer.

Block Lanczos methods have been known to be more stable than scalar Lanczos methods because of their ability to more stably handle degeneracies in the eigenvalue spectrum \cite{bakerPRA24}, avoiding spurious degeneracies that can result from a traditional Lanczos algorithm.

The starting point for this algorithm will be a random initialization of a non-interacting (and non-entangled) wavefunction on the quantum computer. This will be the starting point for all future iterations of the this algorithm. The number of qubits required for this is $n$ to store the wavefunction on the quantum computer.\footnote{Note that the fast counting trick for the Lanczos coefficients used in Ref.~\onlinecite{baker2021lanczos} (and generalizing to block Lanczos) is only possible if the input wavefunction's energy is known as an eigenvalue of the operator in question.}

The traditional formulation of this algorithm on the classical computer is to perform then $X\Psi_0$ ({\it i.e.}, multiplying the matrix onto the wavefunction) as the starting point. However, the discussion as follows does not require anything other than a suitable starting point for the Lanczos algorithm, the block Lanczos algorithm continues exactly as in Ref.~\onlinecite{bakerPRA24}. The multiplication here can be regarded as a way to start relatively closer to the end solution, but it is not necessary.

In all, this represents the initial $b$ states for a block-size $b$ of the Lanczos algorithm. The initial states should be made to be orthogonal to each other. One easy way to create the states on the classical computer is to initialize an $N\times b$ sized matrix $\Psi_0$ randomly and then take a singular value decomposition of the matrix $\mathrm{svd}(\Psi_0)=UDV^\dagger$ \cite{bakerCJP21,*baker2019m} and then reassign $\Psi_0=UV^\dagger$. This will ensure orthogonality. In principle, the initial wavefunction can then be prepared on the quantum computer. 

From the starting wavefunction, we then continue to perform a series of block Lanczos steps as outlined in Ref.~\onlinecite{bakerPRA24}. A block Lanczos algorithm can be formulated as a three term recursion relation
\begin{equation}\label{blocklanczos}
\boldsymbol{\Psi}_{p+1}\mathbf{B}_{p+1}=X\boldsymbol{\Psi}_p-\boldsymbol{\Psi}_p\mathbf{A}_p-\boldsymbol{\Psi}_{p-1}\mathbf{B}_p^\dagger
\end{equation}
where matrices
\begin{equation}\label{An}
\mathbf{A}_p=\boldsymbol{\Psi}_p^\dagger X\boldsymbol{\Psi}_p
\end{equation}
and
\begin{equation}\label{Bn}
\mathbf{B}_p=\boldsymbol{\Psi}_{p-1}^\dagger X\boldsymbol{\Psi}_p
\end{equation}
both of which provide the expectation values that are to be found from the quantum computer to determine the Lanczos coefficient blocks. 

For ease of implementation to find the matrix $\mathbf{B}_{p+1}$, one way to find it is to measure the $b\times b$ elements in the block of $\left(\boldsymbol{\Psi}_{p+1}\mathbf{B}_{p+1}\right)^\dagger\left(\boldsymbol{\Psi}_{p+1}\mathbf{B}_{p+1}\right)=\mathbf{B}_{p+1}^2$ (implicitly making a computation with a real operator more plausible here) and to take the square root of the matrix. All coefficients are stored on the classical computer, so more precision on the classical computer can be stored in order to avoid a less in precision when taking the square root operation.

The matrices $\mathbf{A}_n$ and $\mathbf{B}_n$ are $b\times b$ sub-blocks that create a block-tridiagonal matrix of the form
\begin{equation}\label{rank_k_lanczos}
S=\left(\begin{array}{ccccccc}
\mathbf{A}_0 & \mathbf{B}_1^\dagger & \mathbf{0} & \mathbf{0}\\
\mathbf{B}_1 & \mathbf{A}_1 & \mathbf{B}_2^\dagger & \mathbf{0}\\
\mathbf{0} & \mathbf{B}_2 & \mathbf{A}_2 & \ddots\\
\mathbf{0} & \mathbf{0} & \ddots & \ddots
\end{array}\right)
\end{equation}
where $S$ is the $k$-rank sparse factorization of $X$. 

The operator $X$ does not need to have a special form in order to be applied onto the wavefunctions, it merely must be represented as a linear combination of unitaries. In the seminal work of Ref.~\onlinecite{low2019hamiltonian}, the technique of qubitization defined a pathway to apply not only unitary operations onto the quantum computer, but to convert a broad class of operators to a unitary that can then be applied on the quantum computer. 

\begin{definition}[Sparse-rank]
A matrix $X$ is said to have a sparse-rank if the number of eigenvalues that are large in comparison with the rest is much smaller than the full size of the matrix, $\sum_{p=1}^kE_p\approx\sum_{p=1}^NE_p$.
\end{definition}

The determinant of a sparse-rank matrix is equivalent to the determinant of the sparse-rank form with a small error if the largest eigenvectors are truncated.

\begin{theorem}[Equivalent eigenvalue spectrum of operator and sparse-rank form]
\label{sparserankrepresentation}
The sparse-rank form $S$ of the input matrix $X$ has approximately the same eigenvalues as the input matrix.
\end{theorem}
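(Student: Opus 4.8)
The plan is to recognize the block-tridiagonal matrix $S$ of Eq.~\eqref{rank_k_lanczos} as the Rayleigh--Ritz compression of $X$ onto the block Krylov subspace built by the recursion of Eq.~\eqref{blocklanczos}, and then to bound how far its eigenvalues---the Ritz values---stray from the true spectrum of $X$ by combining a residual estimate with a standard perturbation theorem. First I would collect the orthonormal block vectors $\{\boldsymbol{\Psi}_0,\ldots,\boldsymbol{\Psi}_{k-1}\}$ generated by the recursion into a single matrix $Q$ having orthonormal columns, so that $Q^\dagger Q=I$ and the columns span the block Krylov space $\mathrm{span}\{\boldsymbol{\Psi}_0,X\boldsymbol{\Psi}_0,\ldots,X^{k-1}\boldsymbol{\Psi}_0\}$. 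Rearranging Eq.~\eqref{blocklanczos} and stacking over $p$ then yields the governing identity
\begin{equation}
XQ=QS+\boldsymbol{\Psi}_k\mathbf{B}_k\,\Pi
\end{equation}
in which $\Pi$ selects the trailing block of columns; the only term escaping the subspace is the single boundary contribution $\boldsymbol{\Psi}_k\mathbf{B}_k$. Left-multiplying by $Q^\dagger$ and invoking orthogonality gives $S=Q^\dagger X Q$, so $S$ is precisely the compression of $X$.

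The second step converts this into a spectral statement. Taking an eigenpair $(\theta,v)$ of $S$ with $Sv=\theta v$ and forming the Ritz vector $y=Qv$, substitution of the governing identity gives $Xy=\theta y+\boldsymbol{\Psi}_k\mathbf{B}_k\Pi v$, so the residual obeys
\begin{equation}
\|Xy-\theta y\|=\|\mathbf{B}_k\,\Pi v\|\le\|\mathbf{B}_k\|\,\|\Pi v\|
\end{equation}
since $\boldsymbol{\Psi}_k$ has orthonormal columns. For the Hermitian (or more generally normal) operators of interest the Bauer--Fike theorem then places each Ritz value $\theta$ within this residual distance of a genuine eigenvalue $E_p$ of $X$, so the entire spectrum of $S$ sits inside neighborhoods of the spectrum of $X$ whose radii are controlled by the trailing off-diagonal block norm $\|\mathbf{B}_k\|$.

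Finally I would argue that these neighborhoods are small precisely in the sparse-rank regime. The residual is governed by $\mathbf{B}_k$, which measures the part of $X\boldsymbol{\Psi}_{k-1}$ that leaves the current Krylov subspace; when the singular values decay as assumed in the super-linear convergence Lemma, this escaping component shrinks rapidly in $k$, driving $\|\mathbf{B}_k\|\to0$ and collapsing the Ritz values onto the $k$ most relevant eigenvalues of $X$. The main obstacle is that $S$ and $X$ do not share a dimension, so ``the same eigenvalues'' cannot mean literal equality: the honest conclusion is one-directional containment together with convergence, and one must keep track of \emph{which} $k$ eigenvalues are recovered---namely those whose eigenvectors overlap appreciably with the starting block $\boldsymbol{\Psi}_0$, since eigenvectors orthogonal to the initial Krylov seed are invisible to the process. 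Tying the residual bound quantitatively to the singular-value decay of the earlier Lemma, rather than merely asserting the limit, is where the real content of the argument lies.
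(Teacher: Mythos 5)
Your proof is correct, but it takes a genuinely different route from the paper's. The paper argues at the level of the \emph{complete} Lanczos run: it writes $X=USU^\dagger$ with $U$ the full unitary assembled from all Lanczos vectors, observes that $\det(X)=\det(S)$ because $|\det(U)|=1$, and then simply asserts that once $S$ of Eq.~\eqref{rank_k_lanczos} is truncated at sufficient size its eigenvalues approximate the extremal eigenvalues of $X$; the truncation step---the actual content of the theorem---is not given quantitative control. You instead work directly with the truncated object: identifying $S=Q^\dagger XQ$ as the Rayleigh--Ritz compression, deriving the governing identity $XQ=QS+\boldsymbol{\Psi}_k\mathbf{B}_k\Pi$ from Eq.~\eqref{blocklanczos}, bounding the Ritz residual by $\|\mathbf{B}_k\,\Pi v\|$, and invoking a Bauer--Fike/Hermitian perturbation bound to place every eigenvalue of $S$ near a true eigenvalue of $X$. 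This buys explicit, a posteriori error control (the residual is computable from the trailing block $\mathbf{B}_k$), and it makes precise two caveats the paper glosses over: the dimension mismatch means the claim can only be one-directional containment plus convergence, and only eigenvalues whose eigenvectors overlap the starting block $\boldsymbol{\Psi}_0$ are recoverable. Two small cautions on your final step: it is the product $\|\mathbf{B}_k\,\Pi v\|$, not $\|\mathbf{B}_k\|$ alone, that must shrink---for extremal Ritz pairs convergence typically comes from $\|\Pi v\|\to0$ even when $\mathbf{B}_k$ stays $O(1)$, with $\|\mathbf{B}_k\|\to0$ guaranteed only in the strictly sparse-rank (numerically rank-$k$) regime; and, as you note yourself, tying that decay quantitatively to the singular-value assumption is still owed. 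Neither caution undermines the argument, and your version is the more rigorous of the two.
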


\begin{proof}
Consider the decomposition of the input matrix $X$ into a form
\begin{equation}
X=USU^\dagger
\end{equation}
where $U$ is composed of Lanczos vectors derived from the block Lanczos procedure. The matrix $S$ is the same as in Eq.~\eqref{rank_k_lanczos}. The determinant of the matrix $X$ can be written as
\begin{equation}\label{detXlanczos}
\mathrm{det}(X)=\mathrm{det}(U)\mathrm{det}(S)\mathrm{det}(U^\dagger)=\mathrm{det}(S)
\end{equation}
We used the equality $\mathrm{det}(U)=\mathrm{det}(U^\dagger)=1$ since the determinant of any matrix $U$ is 1 when the input matrix is composed of real numbers. If complex inputs to $X$ are allowed, then the factor would be some complex phase $\exp(i\vartheta)$ but Eq.~\eqref{detXlanczos} would remain unchanged. 

The above is true at any rank representation of $X$ since $U$ is guaranteed to be unitary over the entire space.

The eigenvalues of $S$ are therefore approximately equal to the extremal eigenvalues of the input operator $X$ if a sufficient size $S$ is kept to capture its sparse-rank representation.
\end{proof}

Another way to see this is to consider the expectation value $\langle X\rangle=\sum_pE_p|\langle\psi_p|\Phi\rangle|^2$ and note that if the largest eigenvalues are kept, then the expectation value will have a large overlap with the true expectation value.

The rate at which the eigenvalues converge to the true eigenvectors is dependent on the algorithm used to find $S$. For the Lanczos algorithm, the rate for extremal eigenstates is very fast if the extremal eigenvalues are not spaced too closely together \cite{saad1980rates}.

The oracle query required for the function $f$ for the case of the logarithm-determinant is given here as a quantum phase estimation controlled on an auxiliary register representing the variations of the input function. 

From the knowledge of the Lanczos coefficients, one can then construct the operator $\Gamma^{(p)}$ to construct each excitation \cite{baker2021lanczos,bakerPRA24}.

%We start by taking the coefficients from $S$ and diagonalize to construct the eigenstates on the quantum computer with the operator \cite{bakerPRA24}
%\begin{equation}\label{Gammap}
%\Gamma^{(p)}=\sum_r\gamma^{(p)}_r\mathbf{G}_r
%\end{equation}
%where $\gamma_p$ are the entries of the eigenvectors (of which, these are indexed by $p\in\{1,\ldots,k\}$). This allows fast preparation of the $k$ excitations of the model on the quantum computer. The operator $\mathbf{G}_r$ is defined from the three-term recursion relation from Eq.~\eqref{rank_k_lanczos} as in Refs.~\onlinecite{baker2021lanczos} and \onlinecite{bakerPRA24}. Applying $\mathbf{G}_r$ onto the initial wavefunction gives the $r$th Lanczos vector. Applying $\Gamma^{(p)}$ gives the $p$th eigenvector.

Lanczos methods are known to be rapidly convergent to the extremal eigenstates \cite{saad1980rates}, which makes them an excellent candidate for state preparation on the quantum computer. 

The main utility of the Lanczos method is that the wavefunction preparation is not exponentially costly for the defined set of $k$ sparse-rank input wavefunctions that are considered here. This implies that wavefunction preparation is super-linearly convergent \cite{yuan2018superlinear}. Most importantly, this is not exponential, so the cost of wavefunction preparation is not exorbitantly costly for the use case envisioned here. 

Since the quantum computer stores a great deal of precision out to the uncertainty limit, this implies that the Lanczos coefficients can be determined to a great accuracy than on the classical computer and to avoid the escalating effects of finite precision kept on the classical computer. The only limitations are the sampling time spent to obtain Lanczos coefficients and the accuracy of gates applied on the quantum computer (although this goes linearly with the accuracy of the coefficients \cite{bakerPRA24}).

This will be sufficient to determine the most relevant eigenvalues of the input matrix which is useful on its own in a variety of applications. 

Note that we have stored the coefficients of the block Lanczos on the classical computer, so we are able to construct the sparse-rank $S$ classically and that this smaller sized matrix contains the extremal eigenstates after a small number of Lanczos iterations. This gives us access to eigenenergies as well from the diagonalization of $S$ and Thm.~\ref{sparserankrepresentation}.

\section{Multi-variable series expansion}\label{multiTaylor}

The series expansion over several variables takes the form
\begin{widetext}
\begin{equation}
f(\mathbf{x})=\sum_{n_1=0}^\infty\ldots\sum_{n_N=0}^\infty\frac{(x_1-a_1)^{n_1}\ldots(x_N-a_N)^{n_N}}{n_1!\ldots n_N!}\frac{\partial^{n_1+\ldots+n_N}f}{\partial x_1^{n_1}\ldots\partial x_N^{n_N}}(a_1,\ldots,a_N)
\end{equation}
\end{widetext}
where $\mathbf{x}=\langle x_1,x_2,x_3,x_4,x_5,\ldots x_N\rangle$ and the derivative is centered at $\mathbf{a}=\langle a_1,a_2,a_3,\ldots,a_N\rangle$.

\section{Classical checks on small operators}

We can compute a variety of examples with the technology that we have implemented through a simulator for use in checking future results on a quantum computer and to illustrate the concepts.

For all examples, we fix
\begin{equation}
\Delta\propto\sigma^x
\end{equation}
with elements $\Delta_{ij}$, and we restrict to a single qubit. We focus on the derivative with respect to the $x_{12}$ element in each case.

Derivatives in this section are taken with respect to finite difference approximations of the perturbed matrix (non-quantum implementation). The full quantum implementation, which matches the exact result roughly to $O(L)$, is detailed in Appendix~\ref{qgaimplement}. The scaling as $O(L)$ only means that the leading correction to the answer decreases as $L$.

\begin{figure}
\includegraphics[width=\columnwidth]{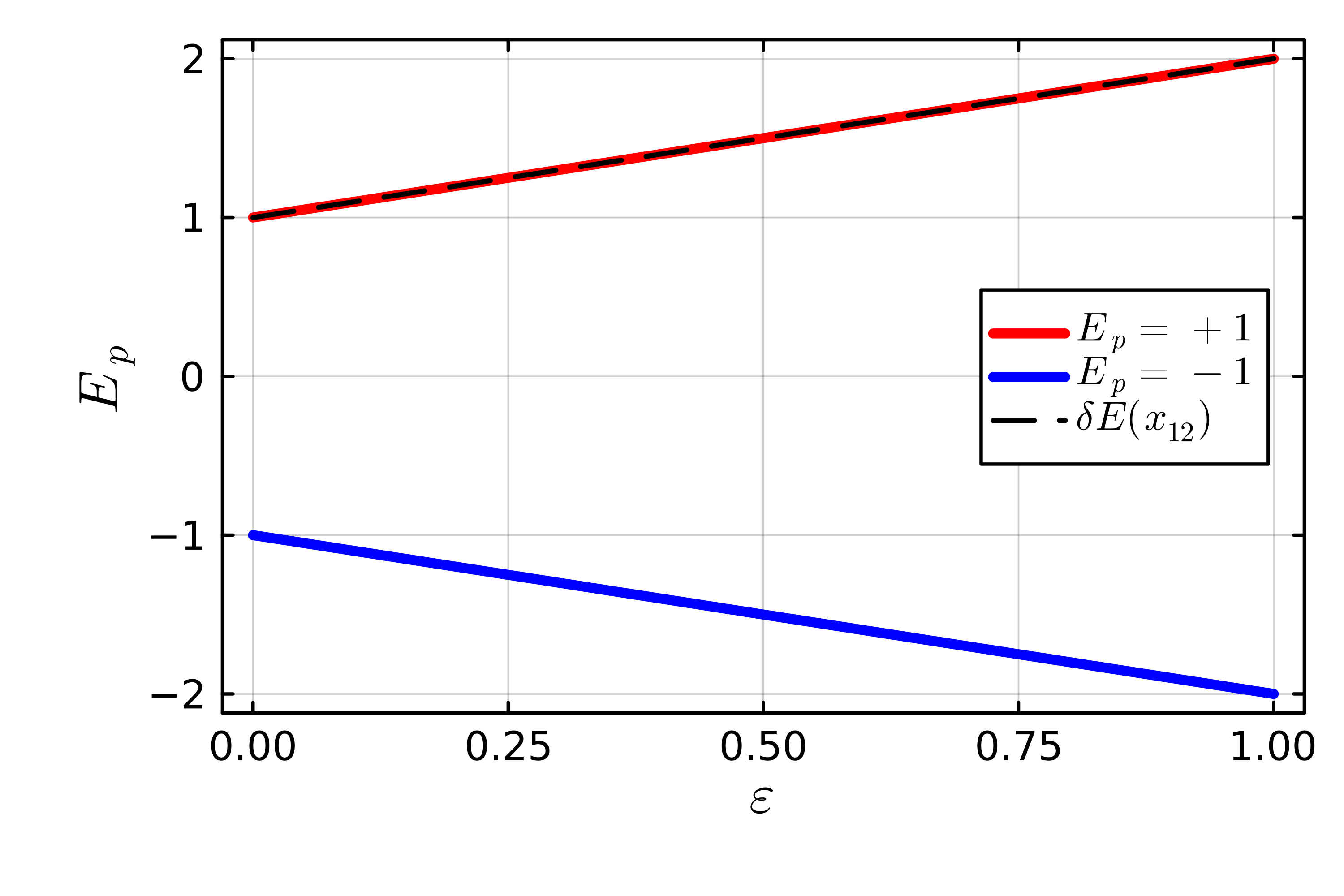}
\caption{
An input operator of $\sigma^x$ given eigenvalues $\pm1$ can be shifted by a matrix $\Delta=\epsilon\sigma^x$. The eigenvalues of the shifted matrix are shown here. The derivatives make for a good way to check the algorithm on a small example. In this case, the eigenvalue lies on top of the derivative of that eigenvalue with respect to $x_{12}$.
\label{ox_example}
}
\end{figure}

\subsection{$X=\sigma^x$}

As a basic example, we compute the derivative of the Pauli-matrix
\begin{equation}
\sigma^x=\left(\begin{array}{cc}
0 & 1\\
1 & 0
\end{array}\right)
\end{equation}
with the oracle query developed in Sec.~\ref{oracle_section}. This means that $X=\sigma^x$. In this example, we have that $\epsilon\Delta_{ij}=\epsilon\sigma^x$ because we want to uncover the derivatives of the eigenvalues in this case. The results over a range of $\epsilon$ are shown in Fig.~\ref{ox_example}.

The inverse of $\sigma^x$ is of course itself because it is unitary. Thus, the answer must be equal to 1. As can be seen in Fig.~\ref{ox_example}, this is the correct answer. Since the eigenvalue of the largest eigenvalue is 1 as well, the logarithm-determinant reduces to just the eigenvalue, hence why this example was chosen.

The quantum algorithm was implemented in a simulator in the DMRjulia library \cite{bakerCJP21,*baker2019m,dmrjulia0,dmrjulia,dmrjulia1}, and the result when using a single qubit register ($m=1$) matches the slope in Fig.~\ref{ox_example}.

\subsection{$X=\sigma^z$}

When giving the input of $\sigma^z$, we again apply $\Delta=\epsilon\sigma^x$. Since the inverse of $\sigma^z$ is also itself, we should expect that this element will be zero. As the result of Fig.~\ref{oz_example} shows, this is true in this case.

\begin{figure}
\includegraphics[width=\columnwidth]{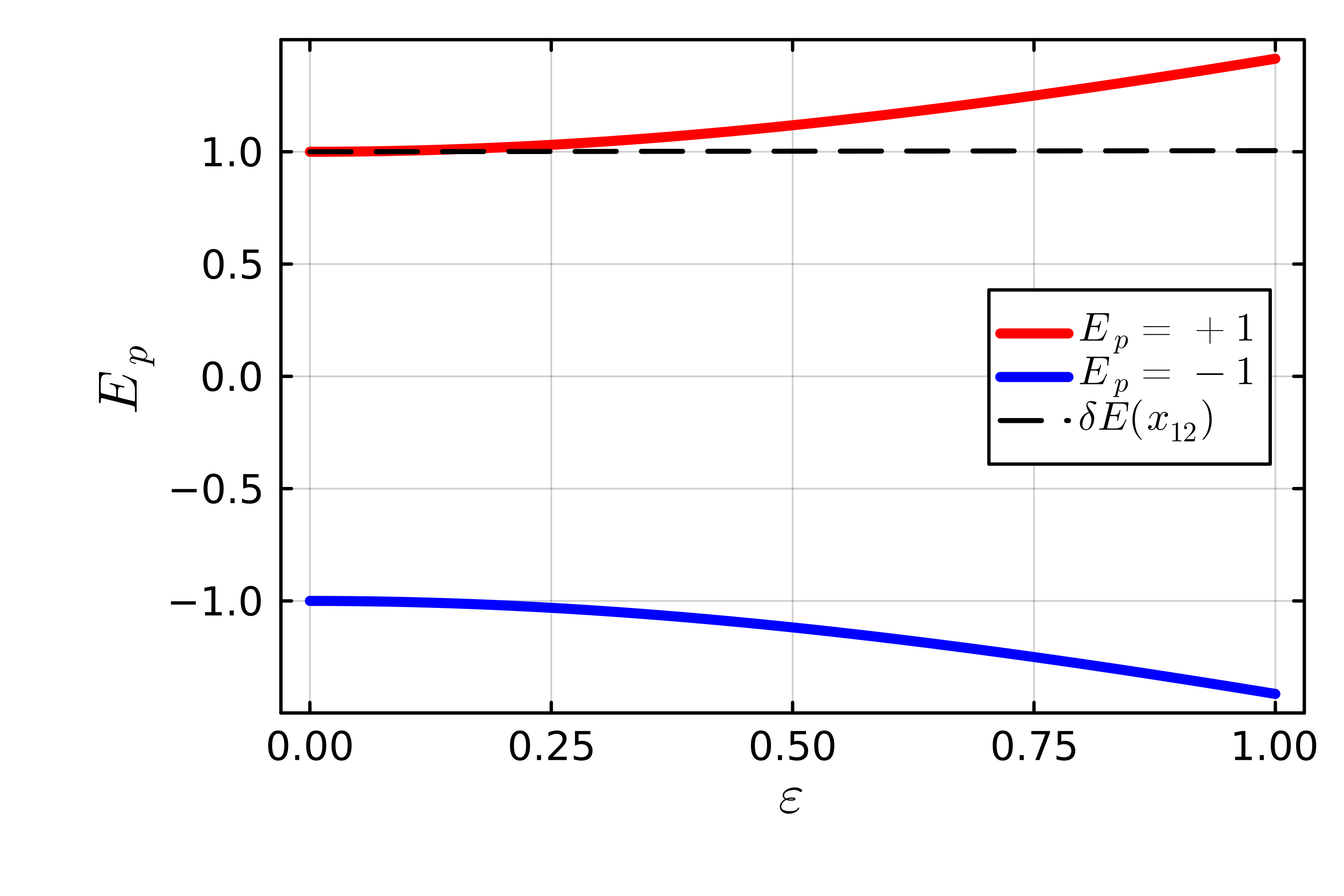}
\caption{
An input operator of $\sigma^z$ given eigenvalues $\pm1$ can be shifted by a matrix $\Delta=\epsilon\sigma^x$. The eigenvalues of the shifted matrix are shown here. The derivatives make for a good way to check the algorithm on a small example.
\label{oz_example}
}
\end{figure}

The quantum simulator also reproduces this result.

\subsection{$X=H$}

\begin{figure}
\includegraphics[width=\columnwidth]{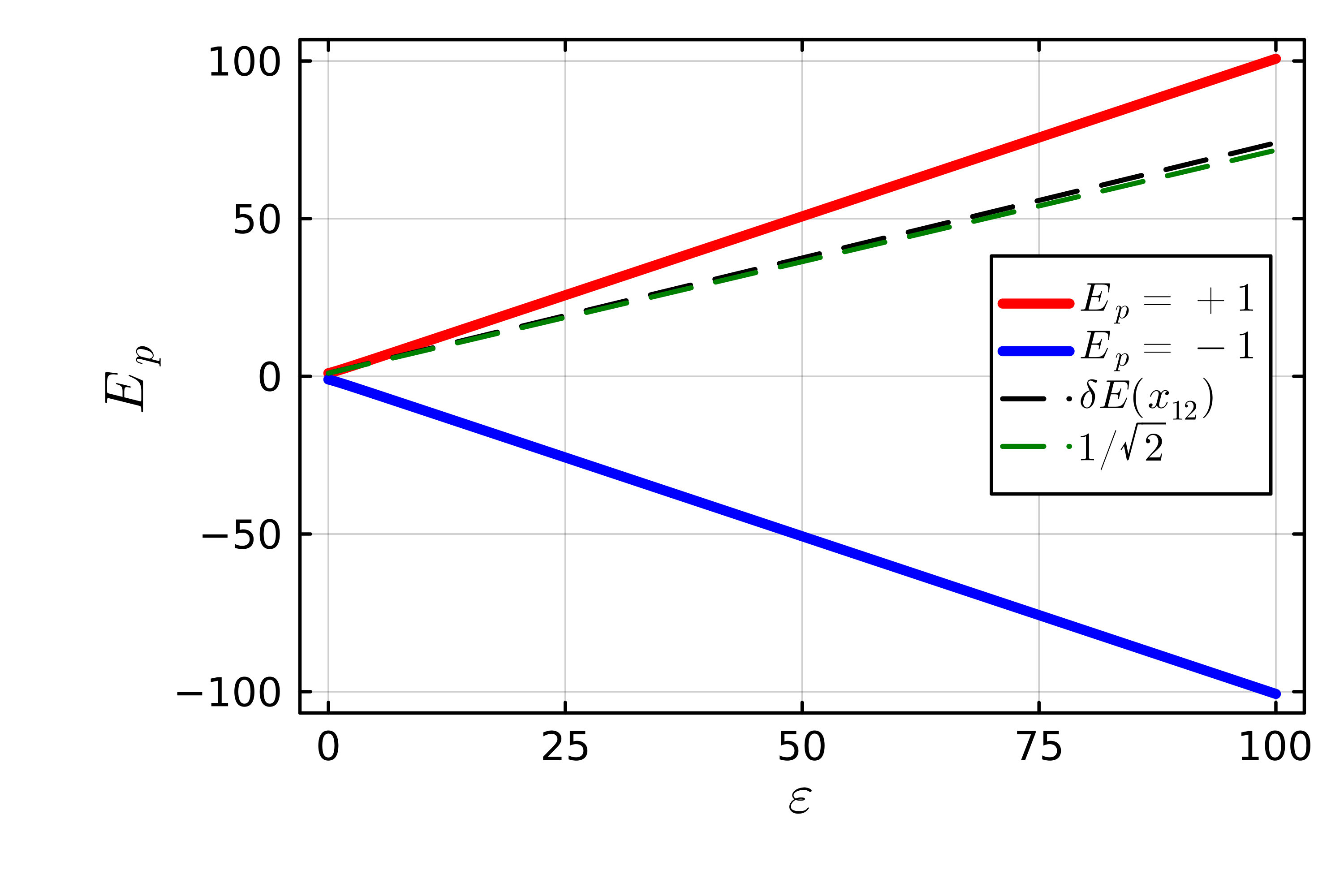}
\caption{
An input operator of $H$ given eigenvalues $\pm1$ can be shifted by a matrix $\Delta=\epsilon\sigma^x$. The Hadamard operator is its own inverse, thus we expect a value of $1/\sqrt2$. The grid discretization $\delta\varepsilon=10^{-1}$ causes some deviation in the slope due to a finite difference method used to check the results but it is not present in the quantum algorithm \cite{jordan2005fast}. On this scale the values are nearly identical, but it is an important difference between the values when performing practical computation.
\label{hadamard_example}
}
\end{figure}

We also can see what happens when we use a Hadamard gate and see what the slopes are. In this case, the Hadamard is also its own inverse; however the eigenvalues are still $\pm1$. We therefore expect a slope of $1/\sqrt2$, matching the inverse element. If we choose a grid spacing of $\delta \epsilon$, then we can see that the deviation between the finite difference ($\delta\epsilon^{(\mathrm{classical})}=0.73...$) between the first two points\footnote{One can in principle extrapolate in the slopes instead of just taking the first two points, but we wanted to illustrate the concept from Ref.~\onlinecite{jordan2005fast}. Automatic differentiation could be used here too.} and quantum gradient results has a deviation on the order of ($\delta\epsilon^{(\mathrm{quantum})}=1/\sqrt2$).

\section{Implementation of the Quantum Gradient Phase Estimation}
\label{qgaimplement}

In the following, we describe the approach for determining the gradient of an eigenvalue with respect to changes in the elements of a matrix. We perturb the original matrix  $\hat H_0$ so that the new matrix is defined by
\begin{equation}
\hat{H}(\epsilon) = \hat H_0 + \frac{L\epsilon}{N} \Delta(a,b)
\end{equation}
where $N = 2^n$ is the size of the Hilbert space of an $n$-qubit register, $L$ is a small quantity representing a linear region for the gradient of the perturbation, $\epsilon \in \{0,N-1\}$ an integer and $\Delta_{(ab)}$ is a sparse Hermitian matrix with one in the matrix position being perturbed corresponding to the element in row $a$ and column $b$. To ensure Hermitianity,  if $a\neq b$ then necessarily the element in row $b$ and column $a$ must also be set to 1.  Alternatively (as in Ref.~\onlinecite{jordan2005fast}), we can have that
\begin{equation}
\hat{H} (\epsilon) =\hat H_0 + \frac{L}{N} \left(\epsilon - \frac{N}{2}\right)\Delta(a,b)
\end{equation}
which is a shifted form of what was used in the text. In the above formulation, the matrix elements are deviated by values within the range $[-L/2, +L/2]$  in place of $[0,+L]$ as in the previous. 

\subsection{Oracle query for the gradient}

 The first register is the answer register which is encoded onto $n$ qubits and the second register is encoded with an eigenstate of the original matrix. As in quantum phase estimation, the general sequence of the circuit is to first perform a Hadamard transform on the first register, followed by a controlled unitary operation which targets the second register, and finally the inverse quantum Fourier transform on the first register.  The initial Hadamard ($H$) transform on the first register forms an equal superposition of input basis states $\{\ket{\epsilon}\}$. So, we have
\begin{equation}
\ket{0}^{\otimes n} \xrightarrow{H^{\otimes n}} \frac{1}{\sqrt{N}} \sum_{\epsilon =0}^{N-1}\ket{\epsilon} \end{equation}
which effectively maps the $0$ state initialized onto $n$ qubits into a register of all possible combinations of numbers. This accounts for all possible directions that the gradient can be taken.

On the second register, there is the eigenstate preparation for $\ket{p}$ using the $\Gamma^{(p)}$ operator
\begin{equation}
\ket{0}^{\otimes n_0} \xrightarrow{\Gamma^{(p)}}\ket{p}
\end{equation}
where $p$ denotes the eigenstate of the matrix $\hat H_0$ that is being prepared.

Next, there is a unitary operation that is controlled ($\hat U_c$) by the input register which targets the eigenstate register. This operator has the format of
\begin{equation}\label{Uc}
\hat U_c =  \sum_{\epsilon =0}^{N-1}\ketbra{\epsilon}{\epsilon} \otimes \hat{U}(\epsilon)
\end{equation}
where the unitary being applied is given as Eq.~\eqref{eq:u_ep}. The value  $W$ is some maximum anticipated value of the gradient. The form of $\hat{U}(\epsilon)$ is given by
\begin{equation}
\hat{U}(\epsilon) = e^{i\frac N{WL}\hat{H}(\epsilon)} \label{eq:u_ep}
\end{equation}
where $N$, $W$, and $L$ were defined the same as in the main text.

If we assume initially that $[H_0, \Delta_{(ab)}] = 0$, the eigenvalue equation is expressed here as
\begin{equation}
\hat H_0 \ket{\Psi} = \lambda_0 \ket{\Psi} 
\end{equation}
and then when the operator is controlled this extended to first order as
\begin{align*}
\hat{H}(\epsilon) \ket{\Psi} &= \Big(\lambda_0 + \delta \lambda (\epsilon)\Big) \ket{\Psi} \\
\hat{H}(\epsilon) \ket{\Psi}  & = \lambda (\epsilon) \ket{\Psi} \numberthis
\end{align*} 
where each wavefunction is not controlled on $\epsilon$.

The eigenvalue corresponding to $\hat U(\epsilon)$ is as in Eq.~\eqref{eq:u_eig} since again we assume here that $\hat H_0$ and $\Delta_{(ab)}$ is commuting. Therefore,
\begin{align*}
e^{i\frac N{WL}\lambda(\epsilon)} &= e^{i\frac N{WL} [\lambda_0 + \delta \lambda(\epsilon)]}  \\
&= e^{i\frac N{WL}\lambda_0} \left(e^{i\frac1W  \epsilon \nabla_{(ab)}\lambda}\right) \numberthis 
\label{eq:u_eig} 
\end{align*}
with  $\nabla_{(ab)}\lambda$ as the gradient of $\lambda$ with respect to infinitesimal changes to the entries of the matrix indicated by the non-zero elements of $\Delta_{(ab)}$, for which we have rewritten 
\begin{equation}
\delta \lambda(\epsilon) = \frac{L\epsilon}{N} \nabla_{(ab)}\lambda\label{eq:delta_gradl}
\end{equation}
giving now the full equation.

 The last step is that the inverse quantum Fourier transform ensures that the gradient $\nabla_{(ab)} \lambda$ is written into register basis, since the pre-factor containing $\lambda_0$ is a global phase.
 
 This procedure is then the set-up for the oracle query of the QGA.

\subsection{Implementation of the quantum Fourier transform}

The entire circuit evolves as outlined in the following. Equations~(\ref{eq:1}) - (\ref{eq:3}) show the transformation with the controlled unitary operation on the two registers, noting that $\lambda$ is replaced by $2\pi \theta$ for convenience with QFT convention in the subsequent step. This is written explicitly as
\begin{equation}
\left(\frac{1}{\sqrt{N}} \sum_{\epsilon =0}^{N-1}\ket{\epsilon} \right) \otimes \ket{p} \label{eq:1} \end{equation}
and then followed by application of $\hat U_c$ from earlier to give
\begin{align} 
&\xrightarrow{\hat U_c}  \frac{e^{i2\pi \frac N{WL}\theta^{(p)}_0 }}{\sqrt{N}}\sum_{\epsilon =0}^{N-1} \ket{\epsilon}\ket{p(\epsilon)} \label{eq:2} \\
&= \frac{e^{i2\pi  \frac N{WL}\theta^{(p)}_0}}{\sqrt{N}}\left(\sum_{\epsilon =0}^{N-1} e^{i2\pi \frac1W \epsilon  \nabla_{(ab)}\theta^{(p)} 
}\ket{\epsilon}\ket{p}\right) \label{eq:3} 
\end{align}
which is now the eigenvalue for each possible $\varepsilon$.

The inverse QFT operator on the first register alone is defined in Eq.~\eqref{eq:4}. This acts on the state at Eq.~\eqref{eq:3} to give Eq.~\eqref{eq:5}. This is expressed by $\mathbb{I}$ is the identity operator)
\begin{equation}
\QFT^{-1} = \frac{1}{\sqrt{N}}\sum_{j=0}^{N-1} \sum_{k=0}^{N-1} e^{-i2\pi jk/N}\ketbra{j}{k} \otimes \mathbb{I} \label{eq:4}
\end{equation}
and then application onto the existing circuit gives
\begin{equation} 
 \xrightarrow{\QFT^{-1}} \frac{\alpha}{N}\sum_{j,k=0}^{N-1} e^{-i2\pi k\left(j-\frac NW\nabla_{(ab)}\theta^{(p)}\right)/N} \ket{j}\otimes\ket{p} \label{eq:5}
 \end{equation}
which is now the full Fourier transform of the input, changing $\epsilon$ to a number $j$. We have denoted
\begin{equation}
\alpha = e^{i2\pi \frac N{WL}\theta^{(p)}_0}
\end{equation}
as the global phase. The wavefunction peaks at the value of $j$ in the first register such that 

\begin{align} j &= \frac{N}{W} \nabla_{(ab)}\theta^{(p)} \\ &= \frac{N}{2\pi W}\nabla_{(ab)}\lambda^{(p)} \end{align}
and therefore is the full expression of what is communicated in Eq.~\eqref{gradientoutcome} under the conventions chosen here.

\subsection{Example 1: Pauli X Gate}

As the first example, we consider the matrix for the $X$-gate \cite{townsend2000modern}
 \begin{equation}
 X = \begin{pmatrix}0 & 1 \\ 1 & 0 \end{pmatrix}
 \end{equation}
 which is useful since it is real and displays off-diagonal elements \cite{nielsen2010quantum}.

An eigenstate of  $X$ may be encoded onto a single qubit. For simple evaluation, we consider a single qubit for the input register as well so that $\epsilon \in \{0,1\}$. The total number of qubits required for this gradient estimation circuit is only $2$. This circuit evolves as
\begin{equation}
\frac{1}{\sqrt{2}} \left(\ket{0} + \ket{1}\right) \otimes \ket{p}
\end{equation}
where we now apply $\hat U_c$ just as above to obtain

\begin{widetext}

\begin{align*} \xrightarrow{\hat U_c}& \frac{1}{\sqrt{2}}\left(e^{i\frac{4\pi}L\theta^{(p)}_0}\ket{0}\ket{p} + \ket{1}\ket{p(\epsilon)}\right) \\
=& \frac{1}{\sqrt{2}}\left(e^{i\frac{4\pi}L\theta^{(p)}_0}\ket{0}\ket{p} + e^{i\frac{4\pi}L\left(\theta_0^{(p)} +\nabla_{(ab)}\theta^{(p)}\frac L2\right)}\ket{1}\ket{p}\right) \\
 =& \frac{e^{i\frac{4\pi}L\theta^{(p)}_0}}{\sqrt{2}} \left(\ket{0}\ket{p} + e^{i2\pi\nabla_{(ab)}\theta^{(p)}}\ket{1}\ket{p}\right) \\
=& \frac{e^{i\frac{4\pi}L\theta^{(p)}_0}}{\sqrt{2}}  \sum_{\epsilon=0}^{1} e^{i2\pi \epsilon \nabla_{(ab)}\theta^{(p)}}\ket{\epsilon}\ket{p} \numberthis 
\end{align*}
so that the circuit is now ready for a phase kick-back. In order to bring the phase onto a register for measurement, we have
\begin{align*}  \xrightarrow{\QFT^\dagger}& \left(\frac{1}{\sqrt{2}}\sum_{j=0}^{1}\sum_{k=0}^{1} e^{-i2\pi jk/2}\ketbra{j}{k} \otimes \mathbb{I}\right)\frac{e^{i\frac{4\pi}L\theta^{(p)}_0}}{\sqrt{2}}  \sum_{\epsilon=0}^{1} e^{i2\pi \epsilon\nabla_{(ab)}\theta^{(p)}}\ket{\epsilon}\ket{p} \\
 = &\frac{e^{i\frac{4\pi}L\theta^{(p)}_0}}{2}\sum_{j=0}^{1} \sum_{k=0}^{1}e^{-i2\pi k(j-2\nabla_{(ab)}\theta^{(p)})/2} \ket{j}\otimes\ket{p} \label{eq:2qbit} \numberthis 
 \end{align*}
 \end{widetext}
and is now displayed in this form to show that the most likely state will collapse to 
\begin{align} 
j &= 2\nabla_{(ab)}\theta^{(p)} \\ 
&= \frac1\pi\nabla_{(ab)}\lambda^{(p)} 
\end{align}
where we have chosen $m=1$ for this problem. 

Starting from the expression in Eq.~\eqref{eq:2qbit}, the exact gradient may be deduced according to the output wavefunction amplitudes. We let $c_0$ and $c_1$ be the output amplitudes corresponding to outcome basis states $\ket{j=0}$ and $\ket{j=1}$ respectively. Calling the global phase $\alpha =  e^{i\frac{4\pi}L\theta^{(p)}_0} $ and writing $2\pi \nabla_{(ab)}\theta^{(p)}=\nabla_{(ab)}\lambda^{(p)}$ , we get
\begin{align*}  c_0 &= \frac{e^{i\frac{4\pi}L\theta^{(p)}_0}}{2} \left( 1 + e^{i2\pi \nabla_{(ab)}\theta^{(p)}}\right) \\ 
&= \frac{\alpha}{2}\left(1 + e^{i\nabla_{(ab)}\lambda^{(p)}} \right) \numberthis 
\end{align*}
and
\begin{align*}  c_1 &= \frac{e^{i\frac{4\pi}L\theta^{(p)}_0}}{2} \left( 1 - e^{i2\pi\nabla_{(ab)}\theta^{(p)}}\right) \\
 &= \frac{\alpha}{2}\left(1 - e^{i\nabla_{(ab)}\lambda^{(p)}} \right) \numberthis 
 \end{align*}
and we know that  $0\leq|c_0|^2\leq 1$ and $|\alpha|^2 = 1$. So, by taking the squared modulus, we deduce the following for our single qubit case, we have that
\begin{equation}
|c_0|^2 = \left|\frac{\alpha}{2}\right|^2 \left| 1 + e^{i\nabla_{(ab)}\lambda^{(p)}}\right|^2
\end{equation}
and therefore
\begin{equation} 
 \nabla_{(ab)}\lambda^{(p)} = 
 \begin{cases} 
  \cos^{-1}(2|c_0|^2 - 1) \equiv 2\cos^{-1}\left(\sqrt{|c_0|^2}\right)   \\
  \\
  \cos^{-1}\left(1-2|c_1|^2\right) \equiv 2\sin^{-1}\left(\sqrt{|c_1|^2}\right) \\
   
 \end{cases}
\end{equation}
gives the full output of the circuit measurement.

Since the answer register is coupled to the eigenstate $\ket{p}$,  we undo the eigenstate preparation operator at the end of the circuit in order to facilitate the analysis of the statevector output for the circuit. 

Implementations of this algorithm to different cases will inevitably require different architectures, especially considering that we do not consider the quantum computer's architecture here. We leave further development on this to future works. This analysis here can be useful for calibrating any such technique and checking on a small problem. We also point to the other examples on a $Z$-gate and Hadamard operator for further instances, all of which follow trivially from this implementation with results shown in Table~\ref{outputs}.

\subsection{Results}

We choose $L= 10^{-6}$ and encode either of the eigenstates of $X$ which are the $\ket{+} = \frac{1}{\sqrt{2}}\left(\ket{0}+\ket{1}\right)$ and $\ket{-} =  \frac{1}{\sqrt{2}}\left(\ket{0}-\ket{1}\right)$. The state $\ket{+}$ is prepared using the a Hadamard gate and the state $\ket{-}$  by applying an $X$-gate before a Hadamard gate. We also test for different forms of the perturbation matrix $\Delta(a,b)$, all of which match the expected result.

\subsection{Example 2: Hadamard Gate}

Considering now the Hadamard gate as the unperturbed matrix $H_0$,
\begin{equation}
\ket{H_+} =  \cos\left(\frac{\pi}{8}\right)\ket{0} + \sin\left(\frac{\pi}{8}\right)\ket{1}
\end{equation}
and
\begin{equation}
\ket{H_-} = -\sin\left(\frac{\pi}{8}\right)\ket{0} + \cos\left(\frac{\pi}{8}\right)\ket{1}
\end{equation}
as eigenvectors.

The state $\ket{H_+}$ is prepared using the rotational $R_y(\frac{\pi}{4})$ gate and
the state $\ket{H_-}$  by applying an $X$-gate before an $R_y(\frac{\pi}{4})$ gate.
Again, $L=1\times 10^{-6}$ and we choose $\Delta(a,b) = X$. For both $\ket{H_+}$ and  $\ket{H_-}$, we get an output of about $\nabla_{ab}\lambda^{(p)} = 0.70710691$ which is approximately equal to $1/\sqrt{2}$ with a difference of $\sim10^{-7}$. Better approximations are achieved by further decreasing the value of the parameter $L$, or extrapolating. We note the error here decreases linearly for the idealized case. Results on different quantum computers may vary.

\begin{table}[h]
\begin{tabular}{|c|c|c|}
\hline
& & \\ 
\; $\Delta(a,b)$ \; &\; $\ket{\Psi ^{(p)}}$ \;& \;$\nabla_{(ab)}\lambda^{(p)}$ \; \\ 
 & & \\  
\hline
\hline
 & & \\
    \; $X = \begin{pmatrix} 0 & 1\\ 1 & 0 \end{pmatrix}$   \;    &    \;    $\ket{+}$  \;     &      0.999999             \\ 
         &       \; $\ket{-} $   \;                &      0.999999                \\ 
         & & \\ 
         \hline
         & & \\
    \; $ \ketbra{0}{0} = \begin{pmatrix} 1 & 0 \\ 0 & 0 \end{pmatrix}$ \;  & \;    $\ket{+}$  \;  & 0.500000   \\
     & \;    $\ket{-}$  \;  & 0.499999  \\ 
     & & \\ \hline
     & & \\
   \; $ \ketbra{1}{1} = \begin{pmatrix} 0 & 0 \\ 0 & 1 \end{pmatrix}$ \;              &      \;    $\ket{+}$  \;                &  0.500000                   \\
   & \;    $\ket{-}$  \;  & 0.499999 \\ 
  & & \\ \hline
               & & \\
  $I = \begin{pmatrix} 1 & 0 \\ 0 &1 \end{pmatrix}$   \;               &         \;    $\ket{+}$  \;            &         0.999999           \\
   &       \; $\ket{-} $   \;                &      1.000000             \\ 
  & & \\ \hline
\end{tabular}
\caption{Gradient outputs using DMRjulia for $\hat H_0 = X$ and $L= 10^{-6}$. 
\label{outputs}
}
\end{table}

\subsection{Simulation details}

We built a quantum simulator on top of a matrix product state in the DMRjulia library \cite{bakerCJP21,*baker2019m,dmrjulia0,dmrjulia,dmrjulia1}. All basic algorithms were tested to ensure accuracy. Implementation details of this library will be detailed later, but the simulator is compiler-assumption-free.

\section{QGLD sampling}

\begin{figure}
	\begin{algorithm}[H]
		\caption{QGLD sampling}
		\label{QGLD-sampling}
		\begin{algorithmic}[1]
			\State Prepare a number of orthogonal random states in superposition
			\State  \textbf{return} Average many samples of the $\Sigma$-QGLD algorithm
		\end{algorithmic}
	\end{algorithm}
\end{figure}

For completeness from Sec.~\ref{sumQGLDsec}, we discuss here the use of a non-equal superposition of eigenstates.

Consider a generic wavefunction, perhaps only restricted by its ease of preparation on the quantum computer,
\begin{equation}\label{unequalsuperposition}
|r\rangle=\sum_{i=1}^Nc_p|p\rangle
\end{equation}
where $c_p\in\mathbb{C}$. This is not the equal superposition of Eq.~\eqref{equalsuperposition}, but we can propose a scheme to sample and recover the coefficients so over many samples, the average of the coefficients becomes uniform.

Note that
\begin{equation}
c_p(r)=\langle r|p\rangle
\end{equation}
which is the coefficient in Eq.~\eqref{unequalsuperposition} for a given input vector $r$. Choosing a number of random initial states $|r\rangle$ would sometimes create coefficients $c_p(r)$ that sometimes are less than and sometimes greater than the equal value desired for Eq.~\eqref{equalsuperposition}.

The average over many choices of $r$ may average to the value needed in Eq.~\eqref{equalsuperposition}, particularly if the probability amplitudes of the spectral decomposition of $|r\rangle$ are distributed such that they are distributed around $1/\sqrt{2^n}$. Thus, averaging the results of many such choices of $r$ would approximate the solution of Eq.~\eqref{equalsuperposition}. The algorithm is simply developed from the previous introductions. Box~\ref{QGLD-sampling} suggests to prepare several states $r$ and average the results.

The primary hope of this construction is that the sampling does not scale with the size of the problem. Each eigenvalue receives a projection from the initial state $|r\rangle$, meaning that the individual terms are uncorrelated from sample to sample merely by construction. 

The scaling of this algorithm would be the same as the $\Sigma$-QGLD times the number of samples required. We mainly explore the sampling as though it were performed individually on each $|r\rangle$ state; however, this does not mean that the expectation value could not be taken at once, potentially with a fast tomography technique \cite{aaronson2019shadow},  but we would prefer to examine an implementation in full in a future work.

In tests of this algorithm on four models (spin-half, Hubbard, $t-J$, and the transverse field Ising model) of a small enough size to solve with exact diagonalization, we noticed a slight increase in the error with the system size. The hope was that the average over many samples would tend towards an equal superposition, but this was not perfectly convergent in our small implementation.  We include these details in case another attempt is tried as a sampling technique may prove useful for this method.

\end{appendix}

\bibliography{QBLR,RQBLM,TEB_papers,TEB_books,lanczos_refs}

\end{document}